\newtheorem{theorem}{Theorem}
\newtheorem{lemma}[theorem]{Lemma}
\newcommand{\rv}[1]{\textcolor{black}{#1}}
\begin{document}

\preprint{APS/123-QED}

\title{A penalty-free quantum algorithm to find energy eigenstates}


\author{Nannan Ma}
\email{e0408695@u.nus.edu}
\affiliation{Centre for Quantum Technologies, National University of Singapore, Singapore 117543, Singapore}

\author{Heng Dai}
\affiliation{Department of Physics, National University of Singapore, Singapore 117551, Singapore}

\author{Jiangbin Gong}
\email{phygj@nus.edu.sg}
\affiliation{Department of Physics, National University of Singapore, Singapore 117551, Singapore}
\affiliation{Centre for Quantum Technologies, National University of Singapore, Singapore 117543, Singapore}

\date{\today}

\begin{abstract}

Finding eigenstates of a given many-body Hamiltonian is a long-standing challenge due to the perceived computational complexity.  Leveraging on the hardware of a quantum computer accommodating the exponential growth of the Hilbert space size with the number of qubits, more quantum algorithms to find the eigenstates of many-body Hamiltonians will be of wide interest with profound implications and applications.  In this work, we advocate a quantum algorithm to find the ground state and excited states of many-body systems,  without any penalty functions, variational steps or hybrid quantum-classical steps.  Our fully quantum algorithm will be an important addition to the quantum computational toolbox to tackle problems intractable on classical machines.

\end{abstract}


\maketitle
\section{Introduction}

The eigenvalues and eigenstates of a many-body Hamiltonian are fundamental elements in both physics research \cite{lin1993exact,qi2011topological,thogersen2004coupled,schuch2009computational} and computational applications \cite{mohseni2022ising,lucas2014ising}. In physics, eigenstate information provides essential insight into a system’s properties and is key to controlling the associated quantum processes. For example, in low-temperature physics \cite{verstraete2004matrix,narasimhachar2015low}, low-lying energy eigenstates dominate the physical behavior, whereas in quantum annealing \cite{hauke2020perspectives,boixo2014evidence}, the minimum energy gap between the ground state and the first excited state determines the adiabaticity of the annealing protocol. Beyond physics, many practical computational problems can be reformulated as finding the eigenvalues and eigenstates of a Hamiltonian. A notable example is the Quadratic Unconstrained Binary Optimization (QUBO) problem \cite{barahona1988application,glover2018tutorial}. 

For an $n$-qubit Hamiltonian system, solving the eigenstates classically requires $O(2^{2n})$ storage and $O(2^{3n})$ computational time in general, making it a real challenge for classical machines as $n$ grows \cite{cuppen1980divide,ceperley1980ground,orus2014practical}. In contrast, quantum algorithms require only $O(n)$ qubits for storage.  To date it is still unclear how to best leverage quantum computers \cite{preskill2023quantum} to solve all eigenstates systematically. 
One known algorithm is the Variational Quantum Eigensolver (VQE) \cite{kandala2017hardware,tilly2022variational, kirby2021variational}, a hybrid quantum-classical algorithm, as it utilizes a parameterized quantum circuit (ansatz) and classical optimization to approximate eigenstates.   The calculation of excited states needs extra considerations, and indeed in variants of VQE, e.g. Neural Quantum Digital Twins \cite{lu2025neural} and Variational Quantum Deflation \cite{higgott2019variational}, one needs to implement a penalty term representing some overlap with the ground state to be added to the loss function, thereby suppressing contributions from the ground state. {Another strategy to find both ground and excited energy eigenstates is to search for the subspace spanned by low-lying eigenstates \cite{nakanishi2019subspace} and then to solve the generalized eigenvalue problem within this subspace \cite{zhang2025unified,chen2025tangent}.}  All such efforts to date are motivating, but should not be regarded as having provided general solutions, because the so-called barren plateaus present a big challenge in all optimization-based approaches \cite{larocca2025barren,mcclean2018barren}, consistent with the inherent complexity of the eigenenergy-eigenstate problem itself.   Though not suffering from the barren plateaus, quantum annealing however requires a sufficiently long evolution time to ensure adiabaticity and some hybrid approaches can be invoked to estimate the gap between the ground state and the excited states.  
That is, even in quantum annealing where the ground state is the focus,  the energy gap between the ground state and the excited states along the entire adiabatic pathway is still a crucial element \cite{chancellor2024experience,albash2018adiabatic,karacan2025filter}. \rv{There are also other methodologies aiming at computing excited states, such as Quantum Filter Diagonalization (QFD) \cite{parrish2019quantum} and Filtered Quantum Phase Estimation (FQPE) \cite{sakuma2026quantum}. However, the efficacy of QFD is highly dependent on the selection and preparation of appropriate reference states, whereas FQPE requires the construction of an optimized filter function tailored to the system Hamiltonian.}

It is also necessary to mention two other algorithms aiming at the ground state: Imaginary Time Evolution (ITE) \cite{xie2024probabilistic, turro2022imaginary,mao2023measurement,gangat2017steady,lin2021real} and Quantum Imaginary Time Evolution (QITE) \cite{jones2019variational,chai2025optimizing,motta2020determining,sun2021quantum,kamakari2022digital}.
Both methods implement the non-unitary evolution operator $e^{-tH}$, where $t$ is the evolution time and $H$ is the system Hamiltonian, but in different ways. ITE implements $e^{-tH}$ directly, for example, by using embedding the targeted time evolution in a larger unitary operator. By contrast, QITE designs a specific unitary operator, without dilation but still requiring measurements on the quantum circuit, so as to produce the sought-after mapping from an initial state $|\psi_0\rangle$ to the final state $e^{-tH}|\psi_0\rangle/\Tr[e^{-tH}|\psi_0\rangle]$.  For both ITE and QITE,  as the imaginary evolution time progresses the system state gradually converges to the true ground state, without the need for careful path design or constraints on evolution speed. Yet such path-free convergence comes at a cost—ITE as described above can suffer from scalability for truly large-scale problems, because it can be experimentally challenging to realize the operator $e^{-tH}$ as part of a larger unitary operator.  But most important of all, ITE to date offers no straightforward extension to find the excited states.  QITE requires to frequently adjust, based on measurement outcomes from the current state,  quantum operations in real time,  either through variational optimization {\cite{nishi2021implementation, benedetti2021hardware}} 
or by solving partial differential equations to determine the circuit parameters \cite{mcardle2019variational}. 
To find excited states, QITE \cite{jones2019variational} typically relies on a penalty term, similar to VQE. This step requires classical post-processing and often demands a highly expressive ansatz or network structure.  The power of QITE is hence also yet to be improved.  \rv{We also note the Quantum Monte Carlo (QMC) method \cite{zhang2025quantum,huggins2022unbiasing,kanno2024quantum} as a well-established framework for implementing ITE stochastically. In this approach, a set of stochastic samples is evolved according to probabilities derived from the system Hamiltonian at each iteration. QMC faces significant challenges, most notably the sign problem that can hinder convergence.}

We propose in this work a quantum algorithm with an ambitious goal: to find both the ground state and excited eigenstates of a many-body Hamiltonian, without using any penalty function.   Since there are no variational steps involved, there will be no encounter with the barren plateaus.  Specifically, a wide class of many-body Hamiltonians $H$ can be modeled as a sum of two components: $H = h_1 + h_2$, with the eigenvalues and eigenstates of $h_1$ and $h_2$ solvable and physically easy to prepare on a quantum circuit,  e.g., the $H$ of the transverse Ising model as one obvious example.   We shall show that energy eigenstates of $H$ can then be found by a series of quantum operations on the quantum circuit.  To that end, our starting point is to shift and rescale the eigenvalues of $h_1$ and $h_2$ so that the ITE of $H$ can be implemented by stochastic sampling \cite{ye2025quantum} of a density matrix composed of projectors of $h_1$ and $h_2$.  The ground state can then be prepared on the quantum circuit from stochastic sampling based ITE.  To find excited states (in principle, more than one),  the main challenge is to project out the highly complex ground state or other lower excited states. Fortunately, this hurdle can now be overcome by making use of a latest protocol called the state-based simulation \cite{alipour2025state}.  Once ground-state projector is realized, the first excited state can be found on the quantum circuit in the same fashion, that is, with stochastic sampling based ITE again, without using any loss function, either.  This process can in principle continue so as to find more excited eigenstates of $H$.  Eigenvalues can be also ``read out"  from the quantum circuits since the expectation values of $h_1$ and $h_2$ can be directly measured.   As elaborated below, a combination of stochastic sampling algorithm based on the simple projectors of $h_1$ and $h_2$ and the state-based simulation to implement projectors of already-solved low-lying eigenstates enables us to find not just the ground state, but multiple excited eigenstates of a big class of many-body Hamiltonians.
\rv{Indeed, the assumed splitting of $H$ into two solvable parts represents a natural situation in many physics-oriened problems and does not limit the scope of relevance of our findings. First, our methodology can be extended to cases where $H$ consists of additional terms $h_i$. The eigenstates of $h_i$ should be with lower entanglement, therefore, the above-mentioned method such VQE or some subspace method can calculate such eigenstates with a circuit with reasonable depth. 
Second, as a fundamental result in computation theory, models within the QMA-complete complexity class can be expressed in this form \cite{PhysRevA.78.012352}.} 
\rv{For Hamiltonians that are not easily expressed in the aforementioned form, we propose a variant in which the Hamiltonian is treated as a linear combination of multi-qubit Pauli operators. In this approach, stochastic sampling is applied directly to these Pauli operators. This variant remains consistent with our primary algorithm and allows for straightforward implementation on a quantum circuit.} We finally use the transverse Ising model to elaborate the execution of our approach. 

\section{algorithm}

\subsection{Mapping $H$ to a mixed-state density matrix $\rho$}
We consider an $n$-qubit system with Hamiltonian $H=h_1+h_2$ in the Hilbert space ${\cal H}$.   Without loss of much generality, we assume that the spectrum decomposition of both $h_1$ and $h_2$ are obvious and easy to prepare on such a system.  
Let $\lambda^j_i$ and $\ket{\psi^j_i}$ denote the $i$-th eigenvalue and the corresponding eigenvector of $h_j$, where $j \in \{1, 2\}$.  In general, $[h_1,h_2] \neq 0$, and precisely this non-commutation relation prevents us from directly obtaining the eigenstates of $H$, even when we have full access to $\lambda^j_i$ and $\ket{\psi^j_i}$.  To exploit the availability of  $\lambda^j_i$ and $\ket{\psi^j_i}$,  we will not just regard $H$ as an abstract operator.  Rather, we shall view $H$ as an object equivalent to a mixed-state density matrix after some rescaling and shifting are applied to  $\lambda^j_i$.  In particular, a simple $h_j$ can be expressed as a linear combination of projectors $P^j_i = \ket{\psi^j_i}\bra{\psi^j_i}$, with the corresponding eigenvalues $\lambda^j_i$ serving as coefficients.  Because $H=h_1+h_2$,  $H$ is a sum of all such projectors, with coefficients $\lambda^j_i$.   To see that $H$ is actually equivalent to a legitimate mixed-state density matrix,  we may, without affecting the eigenstates of $H$,  shift all the $\lambda^j_i$ to make them non-negative so as to represent probabilities, and further rescale all of these coefficients to achieve probability normalization.  Specifically,    if we let $\bar{\lambda}_i^j = \lambda_i^j+ c^j$ by adding a constant offset $c^j$ to $h_j$ and then normalize the adjusted eigenvalues to define a probability distribution of the projectors, we have
\begin{equation}
    \rho = \sum_{i,j} p_i^j \ket{\psi_i^j}\bra{\psi_i^j} = \frac{1}{C} \left(H + (c^1 + c^2)I \right),
\end{equation}
where the ``probabilities'' $p_i^j = \bar{\lambda}_i^j / C$ and $C = \sum_{i,j} \bar{\lambda}_i^j$ is the normalization factor.  
Because we have aimed to convert the eigenvalues $\{\lambda_i^j\}$ into a valid probability distribution $\{p_i^j\}$, we require $c^j \ge -\min_i \lambda_i^j$ to ensure all $\bar{\lambda}_i^j = \lambda_i^j + c^j$ are non-negative.  This is a trivial step since we preassumed that all $\lambda^j_i$ are known or computable beforehand.   This way $\rho$ defined above is a legitimate mixed state density matrix by construction and its eigenstates are identical with those of $H$.   We have thus converted the problem to finding all eigenstates of $\rho$. 
To optimize the efficiency of the execution of our algorithm, we recommend to choose $c^j = -\min_i \lambda_i^j$. 

\subsection{ITE with stochastic sampling from a mixed-state density matrix}

The next step in our algorithm is to implement $e^{-\eta\rho}$ through ITE, with time step $\eta << 1$ via sampling. At each step, according to the probability distribution \( p_i^j \), we randomly sample a projector \( P_{i}^{j}=\ket{\psi_i^j}\bra{\psi_i^j} \) from the mixed state $\rho$. From one individual $P_{i}^{j}$ sampled, we implement the evolution operator $I-\eta P_{i}^{j}$.  Repeating this many steps as the imaginary time evolution proceeds, the ensemble average of the resulting time-evolution operator is $E\left[I-\eta P_{i}^{j}\right]=I-\eta \sum_{i,j} p_i^j P_{i}^j = I-\eta \rho \approx e^{-\eta \rho}$.

After outlining the stochastic sampling approach first considered in \cite{ye2025quantum}, we  remark on some detailed aspects in the execution of  the evolution operator  $I-\eta P_{i}^{j}$ for each sampled projector $P_{i}^{j}$.  This is a rather standard step and  it can be done by a post selection of a designed unitary operator $U$ on a slightly larger system $\mathcal{H}\otimes\mathcal{H}^a$ where $\mathcal{H}^a$ represents a one-qubit ancillary system.  The said $U$ can be the following:

\begin{equation}
\label{qcu}
    U=(I-P)\otimes I+P\otimes \left( (1 - \eta)I + i\sqrt{2\eta - \eta^2}\,\sigma_y \right).
\end{equation}

\begin{figure}
    \centering
     \scalebox{1.3}{
    \Qcircuit @C=1em @R=.7em {
    \lstick{}&\qw&\multigate{3}{U_{\text{tran}}^\dag}&\qw &\ctrl{1}&\qw & \multigate{3}{U_{\text{tran}}}&\qw &\qw \\ \lstick{}&\qw&\ghost{U_{\text{tran}}^\dag}&\qw &\ctrl{2}&\qw & \ghost{U_{\text{tran}}}&\qw &\qw \\ \lstick{}&\cdots & \nghost{U_{\text{tran}}^\dag}&\cdots&\cdots&\cdots & \nghost{U_{\text{tran}}}&\cdots& \quad
    \inputgroupv{1}{4}{.8em}{2.5em}{\rho_{0} }\\
   \lstick{}&\qw & \ghost{U_{\text{tran}}^\dag} &\qw&\ctrl{1}&\qw & \ghost{U_{\text{tran}}}&\qw &\qw
   \\
   \lstick{\ket{0}}&\qw &\qw&\qw &\gate{R_y}&\qw & \qw& \meter & \quad &\rstick{\ket{0}}
   }
    }
    \caption{\rv{Schematic of the $I-\eta P$ update. This circuit (defined by $U$ in Eq.~(\ref{qcu})) encodes a single step of the stochastic evolution. The target system, initially in state $\rho_{0}$, is coupled to a single-qubit ancilla initialized in $\ket{0}$. The transformation $U_{\text{tran}}$ maps the sampled operator $P$ to a canonical projector—specifically $\ket{1}\bra{1}^{\otimes n}$ as shown here. Upon completion of the circuit, the ancilla is measured in the computational basis; we post-select for the $\ket{0}$ outcome to herald the successful application of the $I-\eta P$ operation.}}
    \label{fig:cirillus} 
\end{figure}

The qualitatively meaning of $U$ designed above is to execute a $SU(2)$ operation conditioned upon that the $n$-qubit system is in the $n$-qubit pure state $P_i^j$.  Noting that the states $\ket{\psi_i^j}\bra{\psi_i^j}$ are assumed to be simple states of no or low entanglement content (e.g, all qubits are in state $|1\rangle$),
\rv{This implies that the unitary operator required to prepare the state $\ket{\psi_i^j}\bra{\psi_i^j}$ from the computational basis state $\ket{0}\bra{0}$ can be, in principle, relatively straightforward to implement on a quantum circuit.} $U$ is a $n$-qubit-controlled SU(2) gate, where the $n$-qubit system controls the single-qubit ancillary system, with the controlled operation being $(1 - \eta)I + i\sqrt{2\eta - \eta^2}\,\sigma_y $.
That is, if and only if the $n$-qubit system is indeed in the pure state $P_{i}^{j}$, apply the single-qubit unitary operator $
(1 - \eta)I + i\sqrt{2\eta - \eta^2}\,\sigma_y
$. The involved operation  $(1 - \eta)I + i\sqrt{2\eta - \eta^2}\,\sigma_y$ is a simple $R_y$ gate $e^{-i\frac{\theta}{2}\sigma_y}$ with the rotation angle $\theta=-2\text{arctan}(\frac{\sqrt{2\eta-\eta^2}}{1-\eta})$. This $n$-qubit controlled SU(2) unitary operator can be compiled with $O(n)$ CNOT gates \cite{vale2023decomposition}, feasible on a generic quantum computer with only $O(n)$ two-qubit gates. \rv{Though the algorithm exhibits linear scaling in terms of two-qubit gate counts from a theoretical perspective, its practical implementation remains constrained by the limited connectivity and the ancillary qubit requirements of current quantum hardware}.  The complete flow of our ITE approach with stochastic sampling is shown in Alg.~\ref{alg:the_alg}.

Let us now assume that the initial state of the ancillary qubit is fixed at $\ket{0}$ and there is a measurement on the ancillary qubit after applying $U$. By only keeping the  post selected ancillary-qubit state $\ket{0}$, the resulting effect on the $n$-qubit system is $\bra{0}U\ket{0}=I-\eta P$, the operation we need for the imaginary time evolution. The success rate  of this step is $\mathrm{Tr}((I-\eta P_{i}^{j})\rho_{\rm input}(I-\eta P_{i}^{j}))$, $\rho_{\rm input}$ represents a normalized input of the $n$-qubit system. 

During the execution of the above-described stochastic sampling algorithm,  different $P_{i}^{j}$ will be sampled and therefore one must execute different $n$-qubit controlled unitary operators $U$.  It is worth highlighting that this presents no operational challenge even for large-scale computational problems.  As assumed earlier, $P_i^{j}$ are simple pure-states projectors from eigenstates of $h_1$ and $h_2$ and conversion between these projectors can be very straightforward.  That is,  once a quantum circuit for one $n$-qubit controlled unitary operator is constructed, the other projectors sampled can be constructed with a simple unitary operator $U_{\text{tran}}$ as well, likely by adding more single-qubit operations to each of the $n$ qubits.  As one useful example connected with our simulation results below, the eigenstates of $h_j$ are basic computational basic states (no entanglement), such as $\ket{00011\ldots000}$ or $\ket{+++--\ldots+++}$.
In particular, if the controlled unitary gate for $\ket{000\ldots}\bra{000\ldots}$ has been implemented on a quantum circuit,  a different projector $\ket{+++\ldots}\bra{+++\ldots}$  can be obtained by applying the single qubit Hadamard gate to all the $n$ qubits of our system.  Each step of the ITE executed this way hence only requires a shallow circuit. 
\rv{A schematic representation of the quantum circuit for a single iteration is provided in Fig.~\ref{fig:cirillus}.} 

\begin{algorithm}[H]
    \caption{Imaginary Time Evolution by sampling a mixed-state density matrix}  
    \label{alg:the_alg}
    \begin{algorithmic}[1]
        \Require  
        Eigen-decompositions of $h_1$ and $h_2$: $\{\lambda_i^j, P_i^j = \ket{\psi_i^j}\bra{\psi_i^j}\}$ for $j \in \{1,2\}$;  
        Initial state: $\rho'_0$;  
        Time step: $\eta$;  
        Number of steps: $N$.
        \Ensure  
        Approximate ground state of $H = h_1 + h_2$
        
        \State Shift eigenvalues: $\bar{\lambda}_i^j = \lambda_i^j - \min_{i'} \lambda_{i'}^j$
        \State Normalize to obtain probability distribution:  
        $\{p_i^j = \bar{\lambda}_i^j / C, \; P_i^j\}$, where $C = \sum_{i,j} \bar{\lambda}_i^j$
        \State Initialize: $\rho_{\text{ITE}}' \gets \rho'_0$
        \For{$i = 1$ to $N$}
            \State Sample a projector $P$ from the distribution $\{p_i^j\}$
            \State Construct the corresponding quantum circuit $U_P$ as defined in Eq.~\ref{qcu}
            \State Apply $U_P$ to the joint state $\rho_{\text{ITE}}' \otimes \ket{0}\bra{0}$
            \State Measure the ancillary qubit
            \If{the measurement outcome is $\ket{0}\bra{0}$}
                \State Update state:  
                $\rho_{\text{ITE}}' \gets \dfrac{(I - \eta P)\rho_{\text{ITE}}'(I - \eta P)}{\operatorname{Tr}[(I - \eta P)\rho_{\text{ITE}}'(I - \eta P)]}$
            \Else
                \State Discard the state and \textbf{break}
            \EndIf
        \EndFor
        \State \Return Final state $\rho_{\text{ITE}}'$ as the approximated ground state
    \end{algorithmic}
\end{algorithm}


\rv{$\rho_0$ represents the initial state of the $n$-qubit system. Selecting an appropriate $\rho_0$ that exhibits a significant overlap with the target eigenstate will accelerate convergence; several established methods \cite{lin2020near, zhang2025quantumre} exist for this state preparation process.}
After $N$ steps of the above-described imaginary time evolution for a total evolution time $T = N\eta$,  one can estimate the total error arising from discretizing  the time.  Appendix A shows that the error is mainly determined by $\gamma = N\eta^2$.  One may examine the overall success rate of the whole algorithm, that is,  success for all the $N$ steps when performing post selection of the ancillary qubit state.   
\begin{equation}
    \mathrm{Tr}\left( \sigma_T(\rho_0) \right) + O(\gamma),
\end{equation}
where  $\sigma_T(\rho_0) = e^{-T\rho} \, \rho_0 \, e^{-T\rho}
$.

\subsection{Implementation of projectors of low-lying eigenstates}

 ITE with stochastic sampling as an innovative element, as depicted above, is already a useful algorithm to find the ground state (denoted $|E_g\rangle$)  and complications emerge when trying to find the excited states.  Conceptually, one intuitive strategy is to ``lift" the ground state of the Hamiltonian $H$ by adding a projector $P_g=|E_g\rangle\langle E_g|$ to the system Hamiltonian $H$ with sufficiently large coefficient (larger than the gap between the ground state and the first excited state). This way, finding the new ground state of the modified Hamiltonian by ITE will be equivalent to finding  the first excited state of $H$.  This thought line can continue for finding other excited state one by one.  For example,  in the context of our approach above, one may wish to include $P_g$ in the set of projectors $\{P_i^j\}$ and then apply operations like $I-\eta P_g$ whenever $P_g$ is sampled.
However, this is not feasible in practice because the ground state or other states already found on the quantum circuit is typically of high entanglement content and hence one cannot directly implement projectors like $P_g$.  In  variational algorithms aiming at finding the first excited state, one considers a penalty function to avoid the overlap with the ground state -- a procedure that can be computationally expensive for large scale problems with limited effectiveness, again due to the existence of barren plateaus.

We will now count on a recent key protocol \cite{alipour2025state}, the so-called state-based simulation, to realize operations $I-\eta P_g$ so as to cleanly project out the ground-state contributions during ITE by stochastic sampling.  Specifically, given that we can safely obtain the state $P_g$ on a quantum circuit by use of the new version of ITE here, we can next use the state-based simulation method to indeed implement the evolution $e^{-\eta P_g}$, at the price of introducing a dual $n-$ qubit system under identical setting.  That is, the whole setup involves three subsystems: the $n$-qubit original system $\mathcal{H}$, an $n$-qubit ancillary system $\mathcal{H}^a$, and a single-qubit control system $\mathcal{H}^c$.  We elaborate below that this suffices to realize the key imaginary time evolution step based on $P_g$: namely, $
\rho_1 = (I-\eta P_g) \rho_0 (I-\eta P_g)
$ (up to normalization) in the Hilbert space $\mathcal{H}$ of the system $H$. 

\rv{The circuit architecture for this state-based operation is depicted in Fig.~\ref{fig:stateillus}.}
Initially, the whole set up in the Hilbert space $\mathcal{H}\otimes\mathcal{H}^a\otimes\mathcal{H}^c$ is as follows: the original system is in the state $\rho_0$, the ancillary system is in $\rho_a = P_g$ obtained from our ITE, and the control system is in state $
\rho_c = (\ket{0} - \eta \ket{1})(\bra{0} - \eta \bra{1})
$.  The overall state is hence the direct product state $\rho_0\otimes P_g\otimes \rho_c$.

\begin{figure}
    \centering
     \scalebox{1.5}{
    \Qcircuit @C=1em @R=.7em {
    \lstick{}&\qw&\multigate{7}{U_\text{sw}} &\qw &\qw \\
    \lstick{}&\cdots & \nghost{U_\text{sw}}&\cdots& \quad
    \inputgroupv{1}{4}{.8em}{2.5em}{\rho_{\text{0}} }\\
   \lstick{}&\qw & \ghost{U_\text{sw}}&\qw &\qw
   \\
   \lstick{}&\qw&\ghost{U_\text{sw}}&\qw &\qw \\
    \lstick{}&\qw&\ghost{U_\text{sw}}&\qw &\qw \\
    \lstick{}&\cdots & \nghost{U_\text{sw}}&\cdots& \quad
    \inputgroupv{5}{8}{.8em}{2.5em}{P_g}\\
   \lstick{}&\qw & \ghost{U_\text{sw}}&\qw & \qw & \rstick{\text{Discard}}
   \\
   \lstick{}&\qw & \ghost{U_\text{sw}}&\qw &\qw
   \gategroup{5}{5}{8}{5}{.8em}{\}}
   \\
   \lstick{\ket{0}-\eta\ket{1}}&\qw &\ctrl{-1}& \meter & \rstick{\ket{+}}
   }
    }
    \caption{\rv{Schematic of the state-based simulation. The setup comprises three subsystems, $\mathcal{H}\otimes\mathcal{H}^a\otimes\mathcal{H}^c$: the primary system in state $\rho_{0}$, an ancillary system prepared in $\rho_a = P_g$, and a control qubit in the state $\rho_c \propto (\ket{0} - \eta \ket{1})(\bra{0} - \eta \bra{1})$. A controlled-SWAP gate is applied to the primary and ancillary subsystems, conditioned on the control qubit. Following the gate operation, the ancillary system is discarded, and the control qubit is measured. The procedure is successful upon obtaining the $\ket{+}$ outcome, which heralds the desired state evolution.}}
    \label{fig:stateillus} 
\end{figure}

The next step is to apply a controlled-swap gate: conditional upon the state of the control qubit, we swap the states between the original system and the ancillary system.  After this operation, we discard the ancillary system, leading to the following state of the system in $\mathcal{H} \otimes\mathcal{H}^c$ : $\mathrm{Tr}_{a}(U_{\text{csw}}\rho_0\otimes \rho_a\otimes \rho_c U^\dagger_{\text{csw}} )$.  The last step is to measure the control qubit in the $\ket{+}$ basis. The resulting post-measurement state (conditioned on the $\ket{+}$ outcome) is found to be:
\begin{equation}
\label{sbs}
    \frac{1}{2}(\rho_0 - \eta \{ \rho_0, P_g \}+\eta^2P_g) =\frac{1}{2}(I-\eta P_g)\rho_0(I-\eta P_g) +O(\eta^2),
\end{equation}
up to a normalized factor of $\frac{1}{2}\mathrm{Tr}(\rho_0 - \eta \{ \rho_0, P_g \}+\eta^2P_g)$.

The $1/2$ prefactor in Eq.~\ref{sbs} arises from the fact that the probability of finding the state $\rho_c$ being in the state $\ket{+}$ is always 50\% in each implementation step. \rv{That is, though the evolution $I-\eta P_{i}^{j}$ can be implemented using the state-based simulation, the factor of $1/2$ suggests that this approach should be avoided when possible in favor of the method in Algorithm \ref{alg:the_alg}. This justifies our choice to implement operators such as $I-\eta P_{i}^{j}$ via the previously described stochastic sampling-based ITE, reserving the state-based method specifically for $I-\eta P_g$ where it is necessary. }

\subsection{A penalty-free  algorithm for finding eigenstates above the ground state}
We are now ready to find the excited states of a many-body system represented by $H$.   Without loss of generality, we assume that our target is the first excited state.   Even finding the first excited state, together with the ground state already found, can already offer rich physical insights into a many-body system such as quantum phase transitions and adiabaticity for quantum annealing algorithms.  

Assuming that the ground state $P_g$ has been found and let us now return to the density matrix $\rho$ defined in Eq.~(1), the probability $p_{e1}$ associated with the first excited state projector $P_{e1}$ should be less than $\frac{1}{2^n - 1}$, as $(2^n-1)p_{e1} \leq 1-p_g\leq 1$.  In order to find the first excited,  we must suppress the contribution of the ground state projector $P_g$ so that the first excited state can emerge from the ITE.  To that end, we introduce a lifting procedure by assigning the ground-state sampling a probability $p_{\text{lift}} \geq \frac{1}{2^n - 1}$.  This modification can be done by adding to $\rho$ a new term associated with the ground state while rescaling all other probabilities:

\begin{algorithm}[H]
    \caption{Algorithm for finding the first excited eigenstate of a many-body $H$}  
    \label{alg:sb}
    \begin{algorithmic}[1]
        \Require  
        Eigenvalue distribution: $\{p_i^j, P_i^j = \ket{\psi_i^j}\bra{\psi_i^j}\}$ for $j \in \{1, 2\}$;  
        Ground state projector: $P_g$;  
        Initial state: $\rho'_0$;  
        Time step: $\eta$;  
        Number of steps: $N$;  
        Lift probability for $P_g$: $p_{\text{lift}} \ge \frac{1}{2^n - 1}$
        \Ensure  
        Approximate excited state of $H$

        \State Modify to the eigenvalue distribution $\{p'^j_i, p'_g\}$ in Eq.~ \ref{mp}
        \State Initialize: $\rho_{\text{ITE}}' \gets \rho'_0$
        \For{$i = 1$ to $N$}
            \State Sample a projector $P$ from the modified distribution 
            \If{$P \gets P_g$}
                
                \State Prepare the joint state $\rho_{\text{ITE}}' \otimes P_g \otimes \rho_c$
                \State Apply a controlled-swap gate $U_{\text{CSW}}$ between the system and ancilla, controlled by the control qubit; discard the ancillary system
                \State Measure the control qubit
                \If{the outcome is $\ket{+}\bra{+}$}
                    \State Update state:  
                    \[
                    \rho_{\text{ITE}}' \gets \frac{\rho_{\text{ITE}}' - \eta \{ \rho_{\text{ITE}}', P_g \} + \eta^2 P_g}{\operatorname{Tr}[\rho_{\text{ITE}}' - \eta \{ \rho_{\text{ITE}}', P_g \} + \eta^2 P_g]}
                    \]
                \Else
                    \State Discard the state and \textbf{break}
                \EndIf
            \Else
                \State Construct the corresponding quantum circuit $U_P$ as defined in Eq.~\ref{qcu}
                \State Apply $U_P$ to the joint state $\rho_{\text{ITE}}' \otimes \ket{0}\bra{0}$
                \State Measure the ancillary qubit
                \If{the outcome is $\ket{0}\bra{0}$}
                    \State Update state:  
                    \[
                    \rho_{\text{ITE}}' \gets \frac{(I - \eta P)\rho_{\text{ITE}}'(I - \eta P)}{\operatorname{Tr}[(I - \eta P)\rho_{\text{ITE}}'(I - \eta P)]}
                    \]
                \Else
                    \State Discard the state and \textbf{break}
                \EndIf
            \EndIf
        \EndFor
        \State \Return Final state $\rho_{\text{ITE}}'$ as the approximated excited state
    \end{algorithmic}
\end{algorithm}

\begin{equation}
    p'^j_i = \frac{p^j_i}{1 + p_{\text{lift}}}, \quad p'_g = \frac{p_{\text{lift}}}{1 + p_{\text{lift}}}.
    \label{mp}
\end{equation}
with  the mixed state density matrix now modified to the following: 

\begin{equation}
    \rho' = \sum_{i,j} p'^j_i \ket{\psi^j_i}\bra{\psi^j_i} + p'_g P_g.
\end{equation}
Certainly, $\rho'$ still contains precisely the same eigenstate information as the original many-body system $H$, however the ground state of $\rho'$, by construction, is actually the first excited state of $H$.   We next apply our stochastic sampling algorithm to $\rho'$ to implement ITE to find the ground state of $\rho'$, with the sampling probabilities given by  $p'^j_i$.   Importantly,  when the projector $P_g$ is sampled, we use the above-outlined state-based simulation method to implement the imaginary time evolution operator $e^{-\eta P_g}$.

The success rate of this procedure can be easily estimated and it is
\begin{equation}
    \frac{\mathrm{Tr}\left(\sigma_T(\rho_0)\right)}{2^{p'_g N}}+O(\gamma),
\end{equation}
where $\sigma_T(\rho_0)$ is the final state under the whole imaginary time evolution protocol, and $N$ denotes number of the steps (see Appendix B). From this success rate, one also sees a somewhat expected trade-off between the added weightage of the ground state to filter it out and the overall efficiency in accessing the first excited state.  In particular, though the number of steps $N$ seems to exponentially decrease the success rate, this concern can be lifted if we choose the smallest possible $p_g'\approx p_{\rm lift} \sim 2^{-n}$.  That is, the more qubits we have, the total number of steps we need to implement $e^{-\eta P_g}$ can still be small, and hence the prefactor entering the overall success rate, namely, $2^{-p'_g N}$ in the above equation will stay significant. \rv{The preceding analysis assumes an idealized, fault-tolerant environment. On current Noisy Intermediate-Scale Quantum (NISQ) platforms, performance is expected to degrade as system complexity increases. This deterioration is primarily driven by the accumulation of gate errors and decoherence effects throughout the iterative ITE process.}

\rv{There is also a slightly different approach as compared with that detailed in the Alg.~\ref{alg:sb}. We follow a fixed pattern when applying the operation $I-\eta P_g$ instead of using stochastic sampling.  That is,  on the one hand we execute  the ITE based on stochastic sampling of the density matrix $\rho$ defined in Eq.~(1); on the other hand, at certain predetermined steps, we implement the operator $I-\eta P_g$.  These operations serve the same purpose of filtering out the ground state of $H$. The overall proportion of these steps must exceed a threshold probability determined by $p_{\text{lift}}$.  As a simple strategy, one may apply $I-\eta P_g$ at every even-numbered step during the ITE. That is, the probability assigned to sampling $P_g$ is 50\% (more than the threshold), and this is now done deterministically.  This is fine in the proof-of-principle demonstrations (but not recommended in actual implementations on a quantum hardware), because filtering out the ground state too aggressively in actual experiments will lead to a too low success rate when implementing too many state-based simulation steps.}


To conclude this subsection, we briefly discuss how to extend the above considerations specifically for the first excited state to  other higher excited states.  One straightforward scenario is to require more copies of the original system $H$ and we then execute the above algorithm to prepare the ground state, the first eigenstate\rv{, etc., }on each of the copies.  One can then continue to implement projectors of those already excited states, combined with the stochastic sampling based ITE, to find the next excited state. 

\rv{Finally, we discuss techniques for stabilizing the obtained target state. The accuracy of our algorithm is intrinsically linked to the energy gap. In cases of near-degeneracy—particularly when the number of iterations is limited—the resulting state may manifest as a superposition within the near-degenerate manifold, retaining a non-negligible overlap with other states. To mitigate this, symmetry-sector projection \cite{stetcu2023projection} (if applicable) or subspace methods—such as Quantum Subspace Expansion \cite{mcclean2020decoding,takeshita2020increasing} or Lanczos-type subspace built-up \cite{kirby2023exact}—can be employed to isolate the target state and stabilize the final result}

\subsection{A variant to complement the main algorithm }
\rv{Here we elaborate on another way to implement the key elements of our approach. In general, a Hamiltonian can be decomposed as the linear combination of n-qubit Pauli operators $O_i=\pm \sigma_{i_1}\otimes\dots\otimes\sigma_{i_n}$ with the positive coefficients $\alpha_i$, that is $H=\sum_{i=1}^n\alpha_iO_i$. $O_i$ has only two different eigenvalues $\pm1$, and it can be transformed to be a special one-qubit operator $I^{n-1}\otimes \sigma_z$ using a unitary operator $U_{O_i}$. For example, with two-qubit gate $\ket{0}\bra{0}\otimes I+\ket{1}\bra{1}\otimes\sigma_z$, the Pauli operator $\sigma_z\otimes\sigma_z$ can be transform into $I\otimes\sigma_z$.
In general, for a $n$-qubit system, $U_{O_i}$ consists of $n-1$ two-qubit gates.
We can construct a distribution $p_i= \frac{\alpha_i}{C_\alpha}$ with $C_\alpha=\sum\alpha_i$.}

\rv{To encode ITE related operator into a unitary form, $H'=\sum_{i=1}^np_i(I+O_i)$ is used with the eigenstates unchanged, but lifting the eigenvalues uniformly by $I$. We still evolve $e^{-\eta H'}$ with a stochastic sampling. That is $E[I-\eta(I+O_j)]=I-\eta H' \approx e^{-\eta H'}$. The implementation of $I-\eta(I+O_j)$ is still standard and practical. It is also encoded into a designed unitary operator $U$ on a $n+1$-qubit system $\mathcal{H}\otimes\mathcal{H}^a$ where $\mathcal{H}^a$ is one-qubit ancillary system. $U$ is given as follows:
\begin{widetext}
    \begin{align}
    U&=U_{O_j}^\dagger \left(
    (I^{n-1}\otimes\ket{1}\bra{1})\otimes I+(I^{n-1}\otimes\ket{0}\bra{0})\otimes \left( (1 - 2\eta)I + 2i\sqrt{\eta - \eta^2}\,\sigma_y \right)\right)U_{O_j}\notag\\
    &=U_{O_j}^\dagger \left(
    (I^{n-1}\otimes\ket{1}\bra{1})\otimes I+(I^{n-1}\otimes\ket{0}\bra{0})\otimes R_y(\theta)\right)U_{O_j}
    \end{align}
\end{widetext}
with $\theta=-2\text{arctan}(\frac{\sqrt{\eta-\eta^2}}{1-2\eta})$.
The initial state of the ancillary system is $\ket{0}$, and after $U$, the ancillary system is post selected on $\ket{0}$, the final effect is $\bra{0}U\ket{0}=I-\eta(I+O_i)$.
The consequent steps are the same as explained in Algorithm~\ref{alg:the_alg}. To find the excited sate, we still use the method in Algorithm~\ref{alg:sb}.
}

\begin{figure}[htbp]
	\includegraphics[scale=0.07]{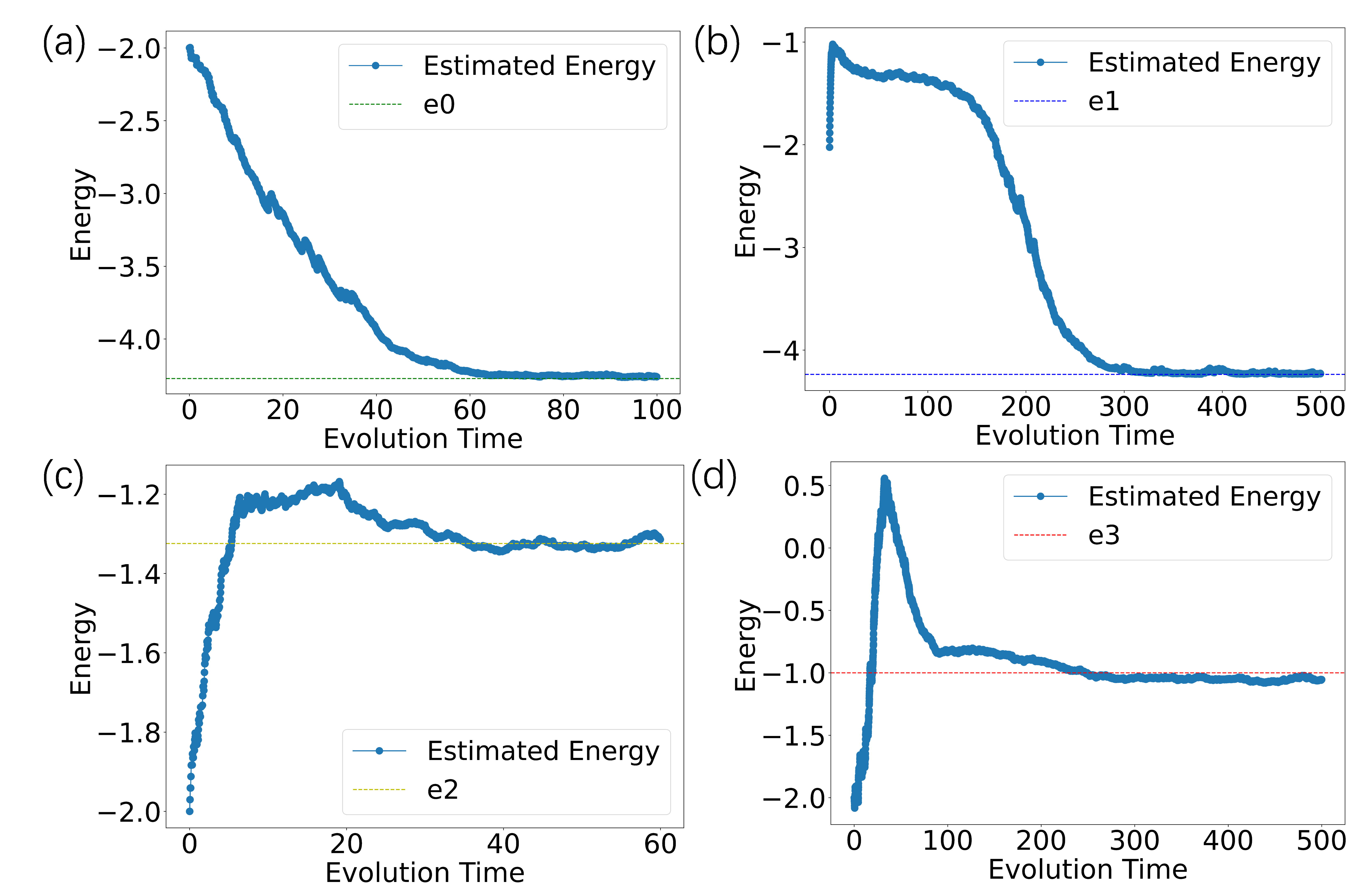}
	\caption{\label{4q} Evolution of the first 4 lower eigenstates of the 4-qubit Ising model with $J=1, B=0.5$. (a) The result for ground state with algorithm 1 as all projectors are easy to implemented. The fidelity is $99.9\%$ (b) The result for first excited state with algorithm 2 with the calculated ground state. The fidelity is $98.6\%$ (c) The result for second excited state with algorithm 2 with the calculated ground state and first excited state. The fidelity is $96.8\%$ (d) The result for third excited state with algorithm 2 with the calculated ground state and first and second excited state. The fidelity is $96.6\%$}
\end{figure}


\begin{figure*}[htbp]
	\includegraphics[scale=0.12]{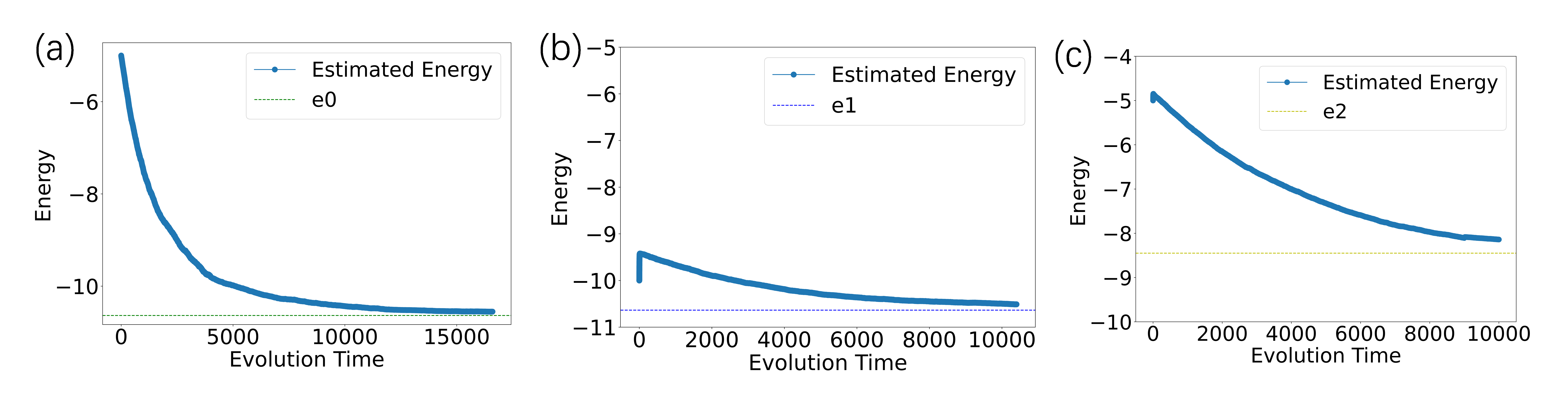}
	\caption{\label{q10} Evolution of the first 3 lower eigenstates of the 10-qubit Ising model with $J=1, B=0.5$. (a) The result for ground state with algorithm 1 as all projectors are easy to implemented. The fidelity is $98.6\%$ (b) The result for first excited state with algorithm 2 with the calculated ground state. The fidelity is $97.8\%$ (c) The result for second excited state with algorithm 2 with the calculated ground state and first excited state. The fidelity is $90.2\%$.}
\end{figure*}

\rv{At the end of this algorithm section, we give the resource estimation (see Appendix C) as an indicator for the practical implementation. The initial fidelity is $f_0$ and the target fidelity is $f_T$. for constructed $\rho$ of the target Hamiltonian $H$, the ground energy value is $\lambda_0$ and the gap is $\Delta\lambda$. To approach the target fidelity $f_T$ with a deviation $\Delta f$, we should set the hyper parameters at least:
\begin{equation}
    \begin{split}
        \eta &=\frac{\Delta\lambda\Delta f}{\ln{\left(\frac{f_T\sqrt{1-f_0^2}}{f_0\sqrt{1-f_T^2}}\right)}}\\
        N &=\frac{1}{\Delta f\Delta^2\lambda}\ln^2{\left(\frac{f_T\sqrt{1-f_0^2}}{f_0\sqrt{1-f_T^2}}\right)}.
    \end{split}
\end{equation}
At each step, there are O(n) two-qubit gate are applied, and 1 (n+1) ancillary qubit is needed for stochastic sampling ITE (state-based ITE). The success probability is up bounded by:
\begin{equation}
    \left(\frac{f_T\sqrt{1-f_0^2}}{f_0\sqrt{1-f_T^2}}\right)^{-2\frac{\lambda_0}{\Delta\lambda}}+O(\Delta f).
\end{equation}
}

\rv{
It is also worth noting that while the total number of time steps $N$ remains on the same order as existing ITE methods, the gate overhead per step differs significantly. General ITE methods typically rely on Trotter-Suzuki expansions, which require $O(n)$ substeps of $e^{\eta h_i}$ to simulate a single step $e^{\eta H}$. Since each substep requires $O(n)$ gates, the total complexity per step scales as $O(n^2)$. In contrast, our method maintains $O(n)$ scaling per step.
}

\section{Proof-of-principle demonstrations}
We now use the one-dimensional Ising model in a transverse field with periodic boundary condition as our working example. The associated Hamiltonian $H$ is given by
\begin{equation}
    H=-J\sum_i Z_i\otimes Z_{i+1}-B\sum_iX_i.
\end{equation}
$Z_i, X_i$ are the Pauli matrices, \rv{$J$ is the coupling strength between neighboring spins, and $B$ is the strength of the transverse magnetic field.} Clearly, $H$ itself represents a many-body system, but it can split into two simple terms
$h_1 = -J \sum_i Z_i Z_{i+1}, \quad h_2 = -B \sum_i X_i$, of which all the eigenstates and eigenvalues are explicitly known.  We choose this transverse Ising model for other two reasons.  First, though this is a many-body system, it is analytically solvable so that it can serve as a benchmarking model. Second,  if we add to this model another simple term such as  $\sum_i X_i\otimes X_{i+1}$, then we will have the so-called XX-ZZ model as a 2-local Hamiltonian in  the quantum-Merlin-Arthur (QMA)  complexity class \cite{PhysRevA.78.012352}.  Even  for the XX-ZZ model, it can also split into three or more simple terms of which all the eigenstates and eigenvalues are also known -- hence our algorithm equally applies.  

We now focus on the transverse Ising model. The associated eigenstates of $h_1$ are simply product states of the computational basis states $\ket{0}, \ket{1}$ and the eigenstates of $h_2$ are simply product states of the $X$-basis states $\ket{+}, \ket{-}$. 
  We set $J=1, B=0.5$ in dimensionless units.  The fidelity $F(\rho_1, \rho_2)=\mathrm{Tr}(\sqrt{\rho_1^{1/2}\rho_2\rho_1^{1/2}})$ is used as an indicator of the accuracy of our algorithm implemented.  Though the design of our algorithm always assumes a large scale quantum platform to be applied, we do not have direct access to quantum computers with many post-selection steps, and hence simulations below are based on classical computers as proof-in-principle demonstrations.   Therefore, the performance level indicated below will not take into account the potential errors incurred on an actual quantum hardware, e.g, when executing any imaginary time evolution operations. Rather, our simulations below, only capturing errors when discretizing the time variable, serve merely as a confirmation that the workflow of our algorithm works for a many-body quantum system using the transverse Ising model as an example.   Furthermore, in our numerical simulations below,  we do not really implement any post selection, instead we directly numerically apply operations related to low-lying states such as $e^{-\eta P_g}$ and hence there is no concern about the overall success rate in our numerical experiments. \rv{The simulation strictly adheres to the procedures outlined in Algorithms~\ref{alg:the_alg} and~\ref{alg:sb}. When computing the $i$-th excited state, the lifting probability $p_{\text{lift}}$ for all previously determined lower eigenstates is set uniformly to $\frac{1}{2i}$}

We begin with the case $L = 4$, using the proposed method to compute the lowest four eigenstates of the system. The corresponding eigenenergies are estimated by evaluating the expectation values of the Hamiltonian $H$ with respect to these states -- a simple step to implement physically because the energetics of $h_1$ and $h_2$ are very simple to evaluate, even in actual experiments. 

\rv{The simulation result of the case $L=4$ is shown in Fig.~\ref{4q}}
To compute the ground state $\rho_g$, we use Algorithm~\ref{alg:the_alg} with a total evolution time $T = 100$, step size $\eta = 0.05$, and a total number of ITE steps $N = 2000$. We observe that the energy decays exponentially over time, consistent with the pattern in standard imaginary time evolution algorithms. The final fidelity with respect to the exact ground state $\rho_0'$ is $99.9\%$, and the resulting (exact) eigenvalue is $-4.259497$ ($-4.271558$).

Next, we calculate the excited states, using the already computed low-lying eigenstates to construct the projectors.  We adopt Algorithm~\ref{alg:sb}:  For the first excited state $\rho_1$, we use $P_\text{lift} = 0.5$, with $T = 500$, $\eta = 0.05$, and $N = 10{,}000$. The fidelity with the exact first excited state is $97.1\%$. The resulting (exact) eigenvalue is $-4.230225$ ($-4.236068$).
For the second excited state $\rho_2$, we need to filter out both the ground state and the first excited state.  To that end, we introduce $P_\text{lift} = 0.25$ for both $\rho_g $ and $ \rho_1$ using $T = 60$, $\eta = 0.02$, and $N = 3000$. The fidelity of $\rho_2$ is $96.8\%$.
The resulting (exact) eigenvalue is $-1.314098$ ($-1.324307$).
For the third excited state $\rho_3$, we can now introduce $P_\text{lift} = 1/6$ for all lower eigenstates, with the same evolution parameters $T = 500$, $\eta = 0.02$, and $N = 25{,}000$. The resulting fidelity for $\rho_3$ is also $96.6\%$.
The resulting (exact) eigenvalue is $-1.053744$ ($-1.000000$).  One can certainly aim to yield higher fidelities by discretizing the time into more steps.

To check what happens if we scale up the system size, we further investigate a case with larger system size $L = 10$. Expectedly, the total number of samples required in the stochastic sampling will exponentially increase (hence longer evolution time).  However,  it is notable that our algorithm continues to yield multiple energy eigenstates reliably, thus signaling its robustness.  The two lowest eigenstates  are almost degenerate, i.e. $e_0=-10.6356$, $e_1=-10.6352$, so we check if our algorithm can yield the three lowest-energy states, including the third lowest eigenstate of eigenvalue  $e_2=-8.4486$. Our focus is on the gap between $e_1$ and $e_2$, with the results shown in Fig.~\ref{q10}.
In calculating the ground state, we initialize the system with the product state $\ket{+}^L$.  After executing our algorithm, the ground state energy is found to be  $-10.5528$, with the fidelity of the ground state found to be $98.6\%$.   Next,
we use the ground state of $h_1$, that is the product state $\ket{0}^L$ to initialize the system (the overlap between the previous state $\ket{+}^L$ and the exact first excited state  is almost 0).  The first excited state energy obtained from our algorithm is $-10.510592$, with the state fidelity $97.8\%$. The third eigenstate is obtained the same way as in the case of $L=4$, with a fidelity of $90.2\%$, and its corresponding energy is found  to be $-8.140736$. The calculated energy gap between $e_1$ and $e_2$ is $2.3699$, close to the exact value of $2.1866$. Notably, although the first two eigenstates are nearly degenerate, the error introduced by the discretization in our ITE algorithm is still small enough to distinguish between these two eigenvalues. 

\begin{figure}[htbp]
	\includegraphics[scale=0.07]{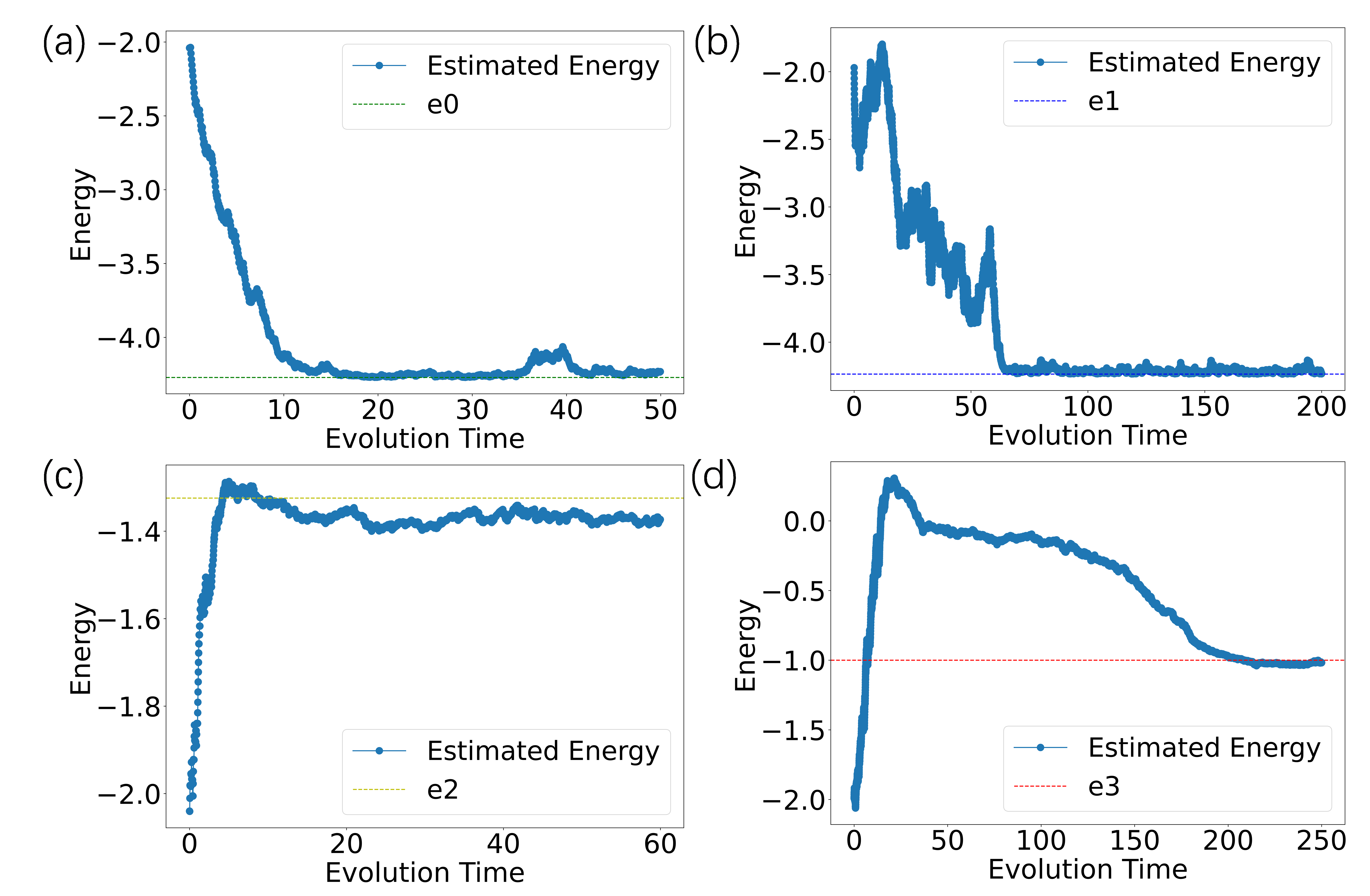}
	\caption{\label{4qpauli} Pauli operator sampling variant : Evolution of the first 4 lower eigenstates of the 4-qubit Ising medol with $J=1, B=0.5$. (a) The result for ground state. The fidelity is $99.5\%$ (b) The result for first excited state with the calculated ground state. The fidelity is $99.4\%$ (c) The result for second excited state with the calculated ground state and first excited state. The fidelity is $97.5\%$ (d) The result for third excited state with the calculated ground state and first and second excited state. The fidelity is $97.1\%$}
\end{figure}

\rv{Lastly, we show present the performance of the Pauli-operator sampling approach using the $L=4$ case, with results presented in Fig.~\ref{4qpauli}. In this implementation, each iteration involves sampling operators of the form $I-\eta Z_i\otimes Z_{i+1}$ or $I-\eta X_i$. For the ground state, we performed stochastic sampling ITE with $T=50$ steps and a step size of $\eta=0.05$, achieving a final fidelity of $99.5\%$ and an eigenvalue of $-4.231827$. To obtain the first excited state ($T=200, \eta=0.02$), the fidelity and eigenvalue reached $99.4\%$ and $-4.229948$, respectively. For the second excited state ($T=60, \eta=0.02$), the fidelity was $97.5\%$ with an energy of $-1.373191$. Finally, the third excited state was obtained with $T=250$ and $\eta=0.01$, resulting in a fidelity of $97.1\%$ and an eigenvalue of $-1.017456$.}

\section{Conclusion}

Solving the eigenvalues and eigenstates of a large Hermitian matrix (such as that of a many-body quantum system) has broad implications and applications in physics and beyond.  In this work,  we have shown that it is possible to construct a fully quantum hardware based algorithm to achieve this ambitious goal, going way beyond finding the ground state of a system Hamiltonian.   Our assumption about $H=h_1+h_2$ is not really restrictive, because many target Hamiltonians (including models in the QMA-complete complexity class \cite{PhysRevA.78.012352}) can indeed be written as a sum of simple components whose eigenvalues and eigenstates are explicitly known and easily measurable on quantum circuits. 

The proposed algorithm is made possible by a combination of two elements: stochastic sampling of simple projectors and a quantum circuit implementation of projectors of low-lying eigenstates already obtained.  Compared with the other ITE algorithms to date, we do not need exponentially expensive steps to design the unitary operations that can accommodate the needed imaginary time evolution. Compared with quantum annealing, we do not need to assume that along a chosen pathway, there is always the absence of dangerously small gap between the ground state and the excited state -- our algorithm works on individual Hamiltonians, for both the ground state and for the excited states.   Because the algorithm proposed here is a purely quantum algorithm capable of finding eigenstates of a large-scale quantum system, it is a concrete example of using quantum circuits to tackle classically intractable problems and will be  broadly applicable to a range of quantum tasks.   

 The overall success rate of our algorithm is a result of many post selections needed in the ITE and in the state-based simulation to implement projectors of already found states on the quantum circuit. How to improve the overall success rate associated with the ITE by examining  an optimal trade-off between accuracy and success rate for large scale problems remains to be investigated. Interestingly,  at least in terms of the success rate associated with the implementation of the projectors to filter out the already found ground state etc, we show that more qubits may not exponentially diminish the overall success rate because there are not many such steps required in the stochastic sampling based ITE operations. 

The work here will further motivate researches to look into potential challenges when our algorithm is implemented on an actual quantum hardware or when it is scaled up to solve real-life large-scale problems.   In particular, many physical systems do exhibit symmetries, and fully exploiting these symmetries to reduce computational cost represents one promising problem to be tackled with.   The actual performance of the proposed algorithm here executed on noisy quantum circuits, and how to mitigate the errors due to both quantum and classical noise must be further studied.  We also hope that the algorithm advocated in this work can motivate future algorithmic innovations targeted at truly large scale problems.  

\section{Acknowledgement}
 J.G. acknowledges support by the National Research Foundation, Singapore through the National Quantum Office, hosted in A*STAR, under its Centre for Quantum Technologies Funding Initiative (S24Q2d0009). J.G. also thanks Yijian Zou, Ching Hua Lee and Barry Sanders for helpful discussions.

\section{Data availability} 
The data are available from
the authors upon reasonable request.

\appendix

\section{Error analysis in stochastic sampling}

As the stochastic sampling approach is an approximation to ITE, it is necessary to analyze the error introduced in this step. Specifically, we consider three sources: the error in the imaginary time evolution, the error in the time-evolved state, and the deviation in the estimating the overall success rate as compared with that based on the exact ITE.

Here, $\Vert \cdot \Vert_1, \Vert \cdot \Vert_\infty$ are trace norm and spectral norm.

\begin{lemma}
(a general version of lemma.1 in \cite{ye2025quantum}) Let $A, B, C$ be matrices such that $\|A\|_1 = a$, and $\|B\|_\infty\leq b , \|C\|_\infty \leq c $, where a,b,c are non-negative constants. Then:
$$
\begin{aligned}
    \|BAB\|_1 &\leq b^2a, \\
    \|BAC + CAB\|_1 &\leq 2bca.
\end{aligned}
$$
\end{lemma}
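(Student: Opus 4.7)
The plan is to reduce both bounds to the standard submultiplicative/Hölder-type inequality for Schatten norms, namely $\|XY\|_1 \leq \|X\|_\infty \|Y\|_1$ and $\|YX\|_1 \leq \|Y\|_1 \|X\|_\infty$. These are textbook facts about the trace and spectral norms (they follow from the singular value decomposition, or equivalently from the duality $\|\cdot\|_1 = \|\cdot\|_\infty^*$), so I would simply invoke them without re-deriving them.

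For the first bound, I would first peel off the left $B$ to get $\|BAB\|_1 \leq \|B\|_\infty \|AB\|_1$, then peel off the right $B$ to get $\|AB\|_1 \leq \|A\|_1 \|B\|_\infty$. Chaining these and inserting the hypothesized bounds yields $\|BAB\|_1 \leq b \cdot a \cdot b = b^2 a$, which is the claim.

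For the second bound, I would start with the triangle inequality, $\|BAC + CAB\|_1 \leq \|BAC\|_1 + \|CAB\|_1$, and then apply the same Hölder inequality separately to each term: $\|BAC\|_1 \leq \|B\|_\infty \|A\|_1 \|C\|_\infty \leq bca$ and likewise $\|CAB\|_1 \leq cba$. Summing gives $2bca$.

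The proof is essentially routine once the right Schatten-norm inequality is isolated, so I do not anticipate a genuine obstacle. The only delicate point worth stating carefully is the non-commutative version of Hölder's inequality in the form $\|XYZ\|_1 \leq \|X\|_\infty \|Y\|_1 \|Z\|_\infty$, which is what makes the three-factor estimates clean; I would either cite it as a standard fact or, if a self-contained derivation is desired, obtain it by applying the two-factor version twice, as sketched above.
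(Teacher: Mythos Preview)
Your proposal is correct and matches the paper's own proof essentially line for line: the paper also invokes H\"older's inequality for the dual pair $(\|\cdot\|_1,\|\cdot\|_\infty)$, peels off the outer factors iteratively to get $\|BAB\|_1\le b^2 a$ and $\|BAC\|_1\le bca$, and then applies the triangle inequality for the second bound. There is nothing to add.
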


\begin{proof}
Using Hölder's inequality for matrix norms with dual pairs $(p = 1, q = \infty)$, we have:

$$
\|BA\|_1 \leq \|B\|_\infty \|A\|_1, \quad \text{and} \quad \|AB\|_1 \leq \|A\|_1 \|B\|_\infty.
$$

Applying this iteratively:

$$
\|BAC\|_1 \leq \|B\|_\infty \|AC\|_1 \leq \|B\|_\infty \|C\|_\infty \|A\|_1 \leq a.
$$

Hence,

$$
\|BAB\|_1 \leq \|B\|_\infty \|AB\|_1 \leq \|B\|_\infty^2 \|A\|_1 \leq b^2a,
$$

and

$$
\|BAC + CAB\|_1 \leq \|BAC\|_1 + \|CAB\|_1 \leq 2bca.
$$
\end{proof}


We now analyze the error introduced by the approximate evolution operator. The total number of steps is $N$, with each step using a sampled projector, and the total evolution time is $T=N\eta$ with the step time $\eta \ll 1$. The expected evolution operator over $N$ steps is:
\begin{equation}
    E_{P_{1\dots N}}(I-\eta P_N)\dots(I-\eta P_1)=(I-\eta \rho)^N.
\end{equation}
The error compared to the exact ITE operator $e^{-N\eta \rho}$ is 
\begin{widetext}
\begin{align}
    e^{-N\eta \rho} =(I+(-\eta \rho)+\frac{1}{2}(-\eta\rho)^2 &+\dots)^N=(I-\eta \rho)^N+\frac{N}{2}(-\eta \rho)^2+O(N^2(\eta\rho)^3),\notag\\
    \Vert e^{-N\eta \rho}-(I-\eta \rho)^N \Vert_1&=O(N\eta^2 \mathrm{Tr}(\rho^2))\leq O(N\eta^2),
\end{align}
\end{widetext}
 where we used $\mathrm{Tr}(\rho^2)\leq 1$.

 Next, we consider the error in the evolved quantum state. We explore the error after one step time $\eta$ first. Let $\rho_1$ $\rho_2$ be two density matrices such that $\Vert \rho_1-\rho_2\Vert_1=\epsilon$. Then:
 \begin{widetext}
 \begin{align}
     &\Vert E_{P\sim \rho}(I-\eta P)\rho_1(I-\eta P)-e^{-\eta \rho}\rho_2e^{-\eta P}\Vert_1 \notag\\
     \leq \quad &  \Vert E_{P\sim \rho}(I-\eta P)(\rho_1-\rho_2)(I-\eta P)\Vert_1 +\Vert E_{p\sim \rho}(I-\eta P)\rho_2(I-\eta P)-e^{-\eta \rho}\rho_2e^{-\eta P}\Vert_1 \notag \\
     \leq \quad& E_{P\sim \rho}\Vert(I-\eta P)(\rho_1-\rho_2)(I-\eta P)\Vert_1 + \Vert E_{P\sim \rho}\eta^2P\rho_2P-\eta^2\rho \rho_2 \rho -R\rho_2R-(R\rho_2(I-\eta\rho)+(I-\eta\rho)\rho_2R) \Vert_1 \notag\\
     \leq \quad & \epsilon + 4\eta^2,
\end{align}
\end{widetext}
where $R=e^{-\eta\rho}-(I-\eta\rho)$ and $\Vert R \Vert_\infty \leq \frac{1}{2}\eta^2$

Now, suppose the initial state is $\rho_0$, the final (unnormalized) state after $N$ steps is is $\sigma_f(\rho_0)$ , while the exact ITE state is $\sigma_T(\rho)=e^{-N\eta\rho}\rho_0e^{-N\eta\rho}$. It follows that $\Vert \sigma_f(\rho_0)-\sigma_T(\rho)\Vert_1\leq 4N\eta^2$. And defining $\gamma =N\eta^2$, we wirte that:
\begin{equation}
    \sigma_f(\rho_0)-\sigma_T(\rho_0)=\gamma o
\end{equation}
with $\Vert o\Vert_1 =c\leq 4$

To estimate the difference between the normalized states, we consider:
\begin{align}
    &\Vert \frac{\sigma_T(\rho_0)+\gamma o}{\mathrm{Tr}(\sigma_T(\rho_0))+c\gamma}-\frac{\sigma_T(\rho_0)}{\mathrm{Tr}(\sigma_T(\rho_0))}\Vert_1 \notag\\
    = \quad& \gamma \Vert \frac{\mathrm{Tr}(\sigma_T(\rho_0))o-c\sigma_T(\rho_0)}{\mathrm{Tr}(\sigma_T(\rho_0))(\mathrm{Tr}(\sigma_T(\rho_0))+c\gamma)}\Vert_1\notag\\
   \leq \quad & \gamma \frac{2|c|}{|(\mathrm{Tr}(\sigma_T(\rho_0))+c\gamma)|}\notag\\
  = \quad & O(\gamma),
\end{align}
where we have assumed $\gamma << e^{-2T \lambda_{\text{min}}}$, with $\lambda_{\text{min}}$ being the minimal eigenvalue of $\rho$.
Finally, we can conclude that the success probability $\mathrm{Tr}(\sigma_f(\rho_0))$ can be estimated as the following: $|\mathrm{Tr}(\sigma_f(\rho_0)-\sigma_T(\rho_0))|\leq O(\gamma)$

\ 
\vspace{1cm}  \section{Errors in finding eigenstates
above the ground state }

In the process of finding eigenstates
above the ground state, the state based simulation is applied to deal with $P_g$ and the the obtained ground state should be slightly different from the exact one. Therefore, we discuss here the error from the state based simulation and the imperfect ground state used in our algorithm. 

For the state based simulation, we assume that the error of two initial state is $\Vert \rho_1-\rho_2\Vert_1=\epsilon$.
Then the error after one step (we ignore the prefactor $\frac{1}{2}$ for simplicity) is
\begin{widetext}
\begin{align}
    &\Vert (\rho_1 - \eta \{ \rho_1, P_g \}+\eta^2P_g)-(I-\eta P_g) \rho_2 (I-\eta P_g)\Vert_1 \notag\\
    \leq \quad & \Vert (I-\eta P_g)(\rho_1- \rho_2) (I-\eta P_g)\Vert_1+ \Vert (\rho_1 - \eta \{ \rho_1, P_g \}+\eta^2P_g)-(I-\eta P_g) \rho_1 (I-\eta P_g) \Vert_1 \notag\\
    \leq \quad&  \epsilon+ \Vert \eta^2 P_g (I-\rho_1) P_g\} \Vert_1  \notag\\
    = \quad & \epsilon+  \eta^2 \text{Tr}((I-\rho_1) P_g)  \notag\\
    \leq \quad& \epsilon +\eta^2,
\end{align}
\end{widetext}
which gives us the desired unnormalized transformed state. The success rate is $\frac{\mathrm{Tr}((I-\eta P_g) \rho_0 (I-\eta P_g)}{2}+c_1\eta^2$ with $|c_1|\leq 1/2$.

At the end of this section, we analyze the error introduced when the ground state used in the lifting procedure is not ideal. Specifically, we consider the scenario where the calculated ground state is used to construct a new Hamiltonian, with the goal of treating the first excited state as the new effective ground state. Since we are primarily interested in the impact of ground state imperfections on the excited state, the problem can be simplified to a two-level (two-band) model. Without loss of generality, we assume that $\ket{0}$ is the true ground state and $\ket{1}$ is the first excited state, separated by an energy gap $b > 0$. The corresponding two-band Hamiltonian can be written as

\begin{equation}
    H = 
\begin{bmatrix}
0 & 0 \\
0 & b  
\end{bmatrix}.
\end{equation}

In practice, the obtained ground state is only approximate and typically has some overlap with the exact excited state. Let the approximate (perturbed) ground state be $\ket{\psi_g}=\ket{0} + \delta\ket{1}$, where $\delta$ quantifies the error, i.e., the overlap with $\ket{1}$. Suppose we lift the ground-state energy by an amount $a>0$. The resulting effective Hamiltonian for excited-state extraction is

\begin{align}
H' &= b\ket{0}\bra{0}+a\ket{\psi_g}\bra{\psi_g} \notag\\
&=
\begin{bmatrix}
a & a\delta \\
a\delta & b + a\delta^2 
\end{bmatrix} =
\begin{bmatrix}
a & 0 \\
0 & b + a\delta^2 
\end{bmatrix} +
a\delta \begin{bmatrix}
0 & 1 \\
1 & 0 
\end{bmatrix}.
\end{align}

Assuming the energy lifting is sufficiently large and the ground-state error is small (specifically, $a \ge 2b$ and $\delta \ll 1$), the off-diagonal term can be regarded as a small perturbation, allowing us to apply perturbation theory. The calculated (perturbed) excited state becomes:

\begin{equation}
    \ket{1'} = \ket{1} - \frac{a\delta}{a - b - a\delta^2} \ket{0}.
\end{equation}

The corresponding error $\delta'$ in the excited state is given by:

\begin{equation}
    \delta<\delta' = \left| \frac{a\delta}{a - b - a\delta^2} \right| \leq \frac{2\delta}{1-2\delta^2}\leq 2\delta+6\delta^3 \approx 2\delta.
\end{equation}
We use the relation $(1-2\delta^2)(1+3\delta^2)\ge 1$ in the third inequality as $\delta \ll 1$.

This result implies that the error introduced by lifting the imperfect ground state propagates to the excited state with an approximate factor 2. As we climb higher in the energy spectrum, this error can accumulate exponentially ($\approx 2^l \delta$ for the $l$-th excited state). Therefore, to accurately extract higher excited states, the ITE evolution must be performed for a sufficiently long time to ensure that the ground state error is reduced to the order of $2^{-l}$, where $l$ is the desired level. 
Therefore, in practice, with our algorithm we only advocate to calculate a limited number of low-lying energy levels to maintain an acceptable accuracy.

\section{resource estimation}

\rv{In this section, we provide a resource analysis for practical implementation, specifically focusing on the number of time steps $N$, the step time $\eta$, the total success probability, and the two-qubit gate count required per step. For simplicity, we define the ground state $\ket{g}$ as the target state and initialize the system in state $\rho_0$ with an initial fidelity $f_0 = \text{Tr}(\sqrt{\sqrt{\rho_0} P_g \sqrt{\rho_0}})$. Suppose in our algorithm the obtain final state is $\rho_T$ after an evolution time $T=N\eta$. The fidelity of $\rho_T$ relative to the ideal target is $f_T$ and we assume the allowed error in this fidelity is bounded by $\Delta f$. In this resource estimation, we first analyze the performance of ideal ITE. By incorporating the previously established differences between the ideal evolution and our proposed method, we then derive the total resource requirements.}

\rv{We define the mixed-state density matrix associated with the Hamiltonian $H$ as $\rho_H$, characterized by a ground-state eigenvalue $\lambda_0$ and a spectral gap $\Delta\lambda$ between the ground and the first excited state. The ideal ITE based on $\rho_H$ yields the final state $\rho'_T = e^{-T\rho_H} \rho_0 e^{-T\rho_H} / \mathcal{Z}$, where $\mathcal{Z} = \text{Tr}(e^{-T\rho_H} \rho_0 e^{-T\rho_H})$ is the normalization factor. We assume that the fidelity of $\rho'_T$ is $f'_T$. Without loss of generality, this evolution can be expressed as:
\begin{equation}
\begin{split}
    \rho_0 &= (f_0\ket{g}+\sqrt{1-f_0^2}e^{i\phi_0}\ket{e})(f_0\bra{g}+\sqrt{1-f_0^2}e^{i\phi_0}\bra{e}) \\
    \rho'_T &= (f'_T\ket{g}+\sqrt{1-{f'}_T^2}e^{i\phi_0}\ket{e})(f'_T\bra{g}+\sqrt{1-{f'}_T^2}e^{i\phi_0}\bra{e})
\end{split}
\end{equation}
where $\ket{e}$ denotes the subspace spanned by all excited states and we take the same phase $\phi_0$ because the ideal ITE implementation will not modify the relative phase between $|g\rangle$ and $|e\rangle$.  To ensure that the final state $\rho'_T$ achieves a fidelity $f'_T>f_T$, the evolution time $T$ must satisfy:
\begin{equation}
\begin{split}
    &\frac{f'_T}{\sqrt{1-{f'}_T^2} }=e^{T\Delta\lambda}\frac{f_0}{\sqrt{1-f_0^2} }\ge \frac{f_T}{\sqrt{1-f_T^2} }\\
    &T\ge\frac{1}{\Delta\lambda}\ln{\left(\frac{f_T\sqrt{1-f_0^2}}{f_0\sqrt{1-f_T^2}}\right)}.
\end{split}
\end{equation}}

\rv{In our actual implementation of the stochastic sampling-based ITE where we have discretized time steps, the final state $\rho_T$ from our algorithm differs from $\rho'_T$, that is $\Vert \rho_T-\rho'_T\Vert_1 =O(\gamma)$, where $\gamma=N\eta^2$. The difference in the fidelity squared between $\rho_T$ and $\rho_T'$ is given by
\begin{align}
|\text{Tr}(\rho_TP_g)-\text{Tr}(\rho'_TP_g) | =&\ \frac{1}{2}|\text{Tr}(\Delta_{\rho_T}P_g+P_g\Delta_{\rho_T}) |\notag\\
    \leq & \ \frac{1}{2}\Vert\Delta_{\rho_T}P_g+P_g\Delta_{\rho_T}\Vert_1\notag\\
    \leq&\ \Vert\Delta_{\rho_T}P_g\Vert_1\notag\\
    \leq&\ \Vert\Delta_{\rho_T}\Vert_1\notag=\ O(\gamma),
\end{align}
where $\Delta_{\rho_T}=\rho_T-\rho'_T$. The first inequality follows from the property that the trace norm of a Hermitian matrix is bounded below by the absolute value of its normalized trace. The third inequality is derived using Hölder's inequality, noting that the trace norm of the ground-state projector is unity, i.e., $\Vert P_g\Vert_1=1$. 
Hence, if we choose $\gamma$ based on a given $\Delta f$ through $\gamma\approx \Delta f$, then all the above inequalities holds. }

\rv{With these considerations and choices, it is then clear that to achieve the fidelity $f_T\pm \Delta f$ with our algorithm, the key implementation parameters are:
\begin{equation}
    \begin{split}
        \eta &=\frac{\gamma}{T}=\frac{\Delta\lambda\Delta f}{\ln{\left(\frac{f_T\sqrt{1-f_0^2}}{f_0\sqrt{1-f_T^2}}\right)}};\\
        N &= \frac{T}{\eta}=\frac{1}{\Delta f\Delta^2\lambda}\ln^2{\left(\frac{f_T\sqrt{1-f_0^2}}{f_0\sqrt{1-f_T^2}}\right)}.
    \end{split}
\end{equation}}

\rv{The total success probability is 
\begin{align}
    &\text{Tr}(e^{-T\rho_H}\rho_0 e^{-T\rho_H})+O(\gamma)\notag\\
    \le &~ e^{-2\lambda_0T}+O(\Delta f)\notag\\
    =&~ \left(\frac{f_T\sqrt{1-f_0^2}}{f_0\sqrt{1-f_T^2}}\right)^{-2\frac{\lambda_0}{\Delta\lambda}}+O(\Delta f).
\end{align}
At each step, there are $O(n)$ two-qubit gates needed. }


\begin{thebibliography}{57}%
	\makeatletter
	\providecommand \@ifxundefined [1]{%
		\@ifx{#1\undefined}
	}%
	\providecommand \@ifnum [1]{%
		\ifnum #1\expandafter \@firstoftwo
		\else \expandafter \@secondoftwo
		\fi
	}%
	\providecommand \@ifx [1]{%
		\ifx #1\expandafter \@firstoftwo
		\else \expandafter \@secondoftwo
		\fi
	}%
	\providecommand \natexlab [1]{#1}%
	\providecommand \enquote  [1]{``#1''}%
	\providecommand \bibnamefont  [1]{#1}%
	\providecommand \bibfnamefont [1]{#1}%
	\providecommand \citenamefont [1]{#1}%
	\providecommand \href@noop [0]{\@secondoftwo}%
	\providecommand \href [0]{\begingroup \@sanitize@url \@href}%
	\providecommand \@href[1]{\@@startlink{#1}\@@href}%
	\providecommand \@@href[1]{\endgroup#1\@@endlink}%
	\providecommand \@sanitize@url [0]{\catcode `\\12\catcode `\$12\catcode
		`\&12\catcode `\#12\catcode `\^12\catcode `\_12\catcode `\%12\relax}%
	\providecommand \@@startlink[1]{}%
	\providecommand \@@endlink[0]{}%
	\providecommand \url  [0]{\begingroup\@sanitize@url \@url }%
	\providecommand \@url [1]{\endgroup\@href {#1}{\urlprefix }}%
	\providecommand \urlprefix  [0]{URL }%
	\providecommand \Eprint [0]{\href }%
	\providecommand \doibase [0]{https://doi.org/}%
	\providecommand \selectlanguage [0]{\@gobble}%
	\providecommand \bibinfo  [0]{\@secondoftwo}%
	\providecommand \bibfield  [0]{\@secondoftwo}%
	\providecommand \translation [1]{[#1]}%
	\providecommand \BibitemOpen [0]{}%
	\providecommand \bibitemStop [0]{}%
	\providecommand \bibitemNoStop [0]{.\EOS\space}%
	\providecommand \EOS [0]{\spacefactor3000\relax}%
	\providecommand \BibitemShut  [1]{\csname bibitem#1\endcsname}%
	\let\auto@bib@innerbib\@empty
	\bibitem [{\citenamefont {Lin}\ \emph {et~al.}(1993)\citenamefont {Lin},
		\citenamefont {Gubernatis}, \citenamefont {Gould},\ and\ \citenamefont
		{Tobochnik}}]{lin1993exact}%
	\BibitemOpen
	\bibfield  {author} {\bibinfo {author} {\bibfnamefont {H.}~\bibnamefont
			{Lin}}, \bibinfo {author} {\bibfnamefont {J.}~\bibnamefont {Gubernatis}},
		\bibinfo {author} {\bibfnamefont {H.}~\bibnamefont {Gould}},\ and\ \bibinfo
		{author} {\bibfnamefont {J.}~\bibnamefont {Tobochnik}},\ }\bibfield  {title}
	{\bibinfo {title} {Exact diagonalization methods for quantum systems},\
	}\href@noop {} {\bibfield  {journal} {\bibinfo  {journal} {Computers in
				Physics}\ }\textbf {\bibinfo {volume} {7}},\ \bibinfo {pages} {400} (\bibinfo
		{year} {1993})}\BibitemShut {NoStop}%
	\bibitem [{\citenamefont {Qi}\ and\ \citenamefont
		{Zhang}(2011)}]{qi2011topological}%
	\BibitemOpen
	\bibfield  {author} {\bibinfo {author} {\bibfnamefont {X.-L.}\ \bibnamefont
			{Qi}}\ and\ \bibinfo {author} {\bibfnamefont {S.-C.}\ \bibnamefont {Zhang}},\
	}\bibfield  {title} {\bibinfo {title} {Topological insulators and
			superconductors},\ }\href@noop {} {\bibfield  {journal} {\bibinfo  {journal}
			{Reviews of modern physics}\ }\textbf {\bibinfo {volume} {83}},\ \bibinfo
		{pages} {1057} (\bibinfo {year} {2011})}\BibitemShut {NoStop}%
	\bibitem [{\citenamefont {Th{\o}gersen}\ and\ \citenamefont
		{Olsen}(2004)}]{thogersen2004coupled}%
	\BibitemOpen
	\bibfield  {author} {\bibinfo {author} {\bibfnamefont {L.}~\bibnamefont
			{Th{\o}gersen}}\ and\ \bibinfo {author} {\bibfnamefont {J.}~\bibnamefont
			{Olsen}},\ }\bibfield  {title} {\bibinfo {title} {A coupled cluster and full
			configuration interaction study of cn and cn-},\ }\href@noop {} {\bibfield
		{journal} {\bibinfo  {journal} {Chemical physics letters}\ }\textbf {\bibinfo
			{volume} {393}},\ \bibinfo {pages} {36} (\bibinfo {year} {2004})}\BibitemShut
	{NoStop}%
	\bibitem [{\citenamefont {Schuch}\ and\ \citenamefont
		{Verstraete}(2009)}]{schuch2009computational}%
	\BibitemOpen
	\bibfield  {author} {\bibinfo {author} {\bibfnamefont {N.}~\bibnamefont
			{Schuch}}\ and\ \bibinfo {author} {\bibfnamefont {F.}~\bibnamefont
			{Verstraete}},\ }\bibfield  {title} {\bibinfo {title} {Computational
			complexity of interacting electrons and fundamental limitations of density
			functional theory},\ }\href@noop {} {\bibfield  {journal} {\bibinfo
			{journal} {Nature physics}\ }\textbf {\bibinfo {volume} {5}},\ \bibinfo
		{pages} {732} (\bibinfo {year} {2009})}\BibitemShut {NoStop}%
	\bibitem [{\citenamefont {Mohseni}\ \emph {et~al.}(2022)\citenamefont
		{Mohseni}, \citenamefont {McMahon},\ and\ \citenamefont
		{Byrnes}}]{mohseni2022ising}%
	\BibitemOpen
	\bibfield  {author} {\bibinfo {author} {\bibfnamefont {N.}~\bibnamefont
			{Mohseni}}, \bibinfo {author} {\bibfnamefont {P.~L.}\ \bibnamefont
			{McMahon}},\ and\ \bibinfo {author} {\bibfnamefont {T.}~\bibnamefont
			{Byrnes}},\ }\bibfield  {title} {\bibinfo {title} {Ising machines as hardware
			solvers of combinatorial optimization problems},\ }\href@noop {} {\bibfield
		{journal} {\bibinfo  {journal} {Nature Reviews Physics}\ }\textbf {\bibinfo
			{volume} {4}},\ \bibinfo {pages} {363} (\bibinfo {year} {2022})}\BibitemShut
	{NoStop}%
	\bibitem [{\citenamefont {Lucas}(2014)}]{lucas2014ising}%
	\BibitemOpen
	\bibfield  {author} {\bibinfo {author} {\bibfnamefont {A.}~\bibnamefont
			{Lucas}},\ }\bibfield  {title} {\bibinfo {title} {Ising formulations of many
			np problems},\ }\href@noop {} {\bibfield  {journal} {\bibinfo  {journal}
			{Frontiers in physics}\ }\textbf {\bibinfo {volume} {2}},\ \bibinfo {pages}
		{5} (\bibinfo {year} {2014})}\BibitemShut {NoStop}%
	\bibitem [{\citenamefont {Verstraete}\ \emph {et~al.}(2004)\citenamefont
		{Verstraete}, \citenamefont {Garcia-Ripoll},\ and\ \citenamefont
		{Cirac}}]{verstraete2004matrix}%
	\BibitemOpen
	\bibfield  {author} {\bibinfo {author} {\bibfnamefont {F.}~\bibnamefont
			{Verstraete}}, \bibinfo {author} {\bibfnamefont {J.~J.}\ \bibnamefont
			{Garcia-Ripoll}},\ and\ \bibinfo {author} {\bibfnamefont {J.~I.}\
			\bibnamefont {Cirac}},\ }\bibfield  {title} {\bibinfo {title} {Matrix product
			density operators: Simulation of finite-temperature and dissipative
			systems},\ }\href@noop {} {\bibfield  {journal} {\bibinfo  {journal}
			{Physical review letters}\ }\textbf {\bibinfo {volume} {93}},\ \bibinfo
		{pages} {207204} (\bibinfo {year} {2004})}\BibitemShut {NoStop}%
	\bibitem [{\citenamefont {Narasimhachar}\ and\ \citenamefont
		{Gour}(2015)}]{narasimhachar2015low}%
	\BibitemOpen
	\bibfield  {author} {\bibinfo {author} {\bibfnamefont {V.}~\bibnamefont
			{Narasimhachar}}\ and\ \bibinfo {author} {\bibfnamefont {G.}~\bibnamefont
			{Gour}},\ }\bibfield  {title} {\bibinfo {title} {Low-temperature
			thermodynamics with quantum coherence},\ }\href@noop {} {\bibfield  {journal}
		{\bibinfo  {journal} {Nature communications}\ }\textbf {\bibinfo {volume}
			{6}},\ \bibinfo {pages} {7689} (\bibinfo {year} {2015})}\BibitemShut
	{NoStop}%
	\bibitem [{\citenamefont {Hauke}\ \emph {et~al.}(2020)\citenamefont {Hauke},
		\citenamefont {Katzgraber}, \citenamefont {Lechner}, \citenamefont
		{Nishimori},\ and\ \citenamefont {Oliver}}]{hauke2020perspectives}%
	\BibitemOpen
	\bibfield  {author} {\bibinfo {author} {\bibfnamefont {P.}~\bibnamefont
			{Hauke}}, \bibinfo {author} {\bibfnamefont {H.~G.}\ \bibnamefont
			{Katzgraber}}, \bibinfo {author} {\bibfnamefont {W.}~\bibnamefont {Lechner}},
		\bibinfo {author} {\bibfnamefont {H.}~\bibnamefont {Nishimori}},\ and\
		\bibinfo {author} {\bibfnamefont {W.~D.}\ \bibnamefont {Oliver}},\ }\bibfield
	{title} {\bibinfo {title} {Perspectives of quantum annealing: Methods and
			implementations},\ }\href@noop {} {\bibfield  {journal} {\bibinfo  {journal}
			{Reports on Progress in Physics}\ }\textbf {\bibinfo {volume} {83}},\
		\bibinfo {pages} {054401} (\bibinfo {year} {2020})}\BibitemShut {NoStop}%
	\bibitem [{\citenamefont {Boixo}\ \emph {et~al.}(2014)\citenamefont {Boixo},
		\citenamefont {R{\o}nnow}, \citenamefont {Isakov}, \citenamefont {Wang},
		\citenamefont {Wecker}, \citenamefont {Lidar}, \citenamefont {Martinis},\
		and\ \citenamefont {Troyer}}]{boixo2014evidence}%
	\BibitemOpen
	\bibfield  {author} {\bibinfo {author} {\bibfnamefont {S.}~\bibnamefont
			{Boixo}}, \bibinfo {author} {\bibfnamefont {T.~F.}\ \bibnamefont
			{R{\o}nnow}}, \bibinfo {author} {\bibfnamefont {S.~V.}\ \bibnamefont
			{Isakov}}, \bibinfo {author} {\bibfnamefont {Z.}~\bibnamefont {Wang}},
		\bibinfo {author} {\bibfnamefont {D.}~\bibnamefont {Wecker}}, \bibinfo
		{author} {\bibfnamefont {D.~A.}\ \bibnamefont {Lidar}}, \bibinfo {author}
		{\bibfnamefont {J.~M.}\ \bibnamefont {Martinis}},\ and\ \bibinfo {author}
		{\bibfnamefont {M.}~\bibnamefont {Troyer}},\ }\bibfield  {title} {\bibinfo
		{title} {Evidence for quantum annealing with more than one hundred qubits},\
	}\href@noop {} {\bibfield  {journal} {\bibinfo  {journal} {Nature physics}\
		}\textbf {\bibinfo {volume} {10}},\ \bibinfo {pages} {218} (\bibinfo {year}
		{2014})}\BibitemShut {NoStop}%
	\bibitem [{\citenamefont {Barahona}\ \emph {et~al.}(1988)\citenamefont
		{Barahona}, \citenamefont {Gr{\"o}tschel}, \citenamefont {J{\"u}nger},\ and\
		\citenamefont {Reinelt}}]{barahona1988application}%
	\BibitemOpen
	\bibfield  {author} {\bibinfo {author} {\bibfnamefont {F.}~\bibnamefont
			{Barahona}}, \bibinfo {author} {\bibfnamefont {M.}~\bibnamefont
			{Gr{\"o}tschel}}, \bibinfo {author} {\bibfnamefont {M.}~\bibnamefont
			{J{\"u}nger}},\ and\ \bibinfo {author} {\bibfnamefont {G.}~\bibnamefont
			{Reinelt}},\ }\bibfield  {title} {\bibinfo {title} {An application of
			combinatorial optimization to statistical physics and circuit layout
			design},\ }\href@noop {} {\bibfield  {journal} {\bibinfo  {journal}
			{Operations Research}\ }\textbf {\bibinfo {volume} {36}},\ \bibinfo {pages}
		{493} (\bibinfo {year} {1988})}\BibitemShut {NoStop}%
	\bibitem [{\citenamefont {Glover}\ \emph {et~al.}(2018)\citenamefont {Glover},
		\citenamefont {Kochenberger},\ and\ \citenamefont {Du}}]{glover2018tutorial}%
	\BibitemOpen
	\bibfield  {author} {\bibinfo {author} {\bibfnamefont {F.}~\bibnamefont
			{Glover}}, \bibinfo {author} {\bibfnamefont {G.}~\bibnamefont
			{Kochenberger}},\ and\ \bibinfo {author} {\bibfnamefont {Y.}~\bibnamefont
			{Du}},\ }\bibfield  {title} {\bibinfo {title} {A tutorial on formulating and
			using qubo models},\ }\href@noop {} {\bibfield  {journal} {\bibinfo
			{journal} {arXiv preprint arXiv:1811.11538}\ } (\bibinfo {year}
		{2018})}\BibitemShut {NoStop}%
	\bibitem [{\citenamefont {Cuppen}(1980)}]{cuppen1980divide}%
	\BibitemOpen
	\bibfield  {author} {\bibinfo {author} {\bibfnamefont {J.~J.}\ \bibnamefont
			{Cuppen}},\ }\bibfield  {title} {\bibinfo {title} {A divide and conquer
			method for the symmetric tridiagonal eigenproblem},\ }\href@noop {}
	{\bibfield  {journal} {\bibinfo  {journal} {Numerische Mathematik}\ }\textbf
		{\bibinfo {volume} {36}},\ \bibinfo {pages} {177} (\bibinfo {year}
		{1980})}\BibitemShut {NoStop}%
	\bibitem [{\citenamefont {Ceperley}\ and\ \citenamefont
		{Alder}(1980)}]{ceperley1980ground}%
	\BibitemOpen
	\bibfield  {author} {\bibinfo {author} {\bibfnamefont {D.~M.}\ \bibnamefont
			{Ceperley}}\ and\ \bibinfo {author} {\bibfnamefont {B.~J.}\ \bibnamefont
			{Alder}},\ }\bibfield  {title} {\bibinfo {title} {Ground state of the
			electron gas by a stochastic method},\ }\href@noop {} {\bibfield  {journal}
		{\bibinfo  {journal} {Physical review letters}\ }\textbf {\bibinfo {volume}
			{45}},\ \bibinfo {pages} {566} (\bibinfo {year} {1980})}\BibitemShut
	{NoStop}%
	\bibitem [{\citenamefont {Or{\'u}s}(2014)}]{orus2014practical}%
	\BibitemOpen
	\bibfield  {author} {\bibinfo {author} {\bibfnamefont {R.}~\bibnamefont
			{Or{\'u}s}},\ }\bibfield  {title} {\bibinfo {title} {A practical introduction
			to tensor networks: Matrix product states and projected entangled pair
			states},\ }\href@noop {} {\bibfield  {journal} {\bibinfo  {journal} {Annals
				of physics}\ }\textbf {\bibinfo {volume} {349}},\ \bibinfo {pages} {117}
		(\bibinfo {year} {2014})}\BibitemShut {NoStop}%
	\bibitem [{\citenamefont {Preskill}(2023)}]{preskill2023quantum}%
	\BibitemOpen
	\bibfield  {author} {\bibinfo {author} {\bibfnamefont {J.}~\bibnamefont
			{Preskill}},\ }\bibfield  {title} {\bibinfo {title} {Quantum computing 40
			years later},\ }in\ \href@noop {} {\emph {\bibinfo {booktitle} {Feynman
				Lectures on Computation}}}\ (\bibinfo  {publisher} {CRC Press},\ \bibinfo
	{year} {2023})\ pp.\ \bibinfo {pages} {193--244}\BibitemShut {NoStop}%
	\bibitem [{\citenamefont {Kandala}\ \emph {et~al.}(2017)\citenamefont
		{Kandala}, \citenamefont {Mezzacapo}, \citenamefont {Temme}, \citenamefont
		{Takita}, \citenamefont {Brink}, \citenamefont {Chow},\ and\ \citenamefont
		{Gambetta}}]{kandala2017hardware}%
	\BibitemOpen
	\bibfield  {author} {\bibinfo {author} {\bibfnamefont {A.}~\bibnamefont
			{Kandala}}, \bibinfo {author} {\bibfnamefont {A.}~\bibnamefont {Mezzacapo}},
		\bibinfo {author} {\bibfnamefont {K.}~\bibnamefont {Temme}}, \bibinfo
		{author} {\bibfnamefont {M.}~\bibnamefont {Takita}}, \bibinfo {author}
		{\bibfnamefont {M.}~\bibnamefont {Brink}}, \bibinfo {author} {\bibfnamefont
			{J.~M.}\ \bibnamefont {Chow}},\ and\ \bibinfo {author} {\bibfnamefont
			{J.~M.}\ \bibnamefont {Gambetta}},\ }\bibfield  {title} {\bibinfo {title}
		{Hardware-efficient variational quantum eigensolver for small molecules and
			quantum magnets},\ }\href@noop {} {\bibfield  {journal} {\bibinfo  {journal}
			{nature}\ }\textbf {\bibinfo {volume} {549}},\ \bibinfo {pages} {242}
		(\bibinfo {year} {2017})}\BibitemShut {NoStop}%
	\bibitem [{\citenamefont {Tilly}\ \emph {et~al.}(2022)\citenamefont {Tilly},
		\citenamefont {Chen}, \citenamefont {Cao}, \citenamefont {Picozzi},
		\citenamefont {Setia}, \citenamefont {Li}, \citenamefont {Grant},
		\citenamefont {Wossnig}, \citenamefont {Rungger}, \citenamefont {Booth} \emph
		{et~al.}}]{tilly2022variational}%
	\BibitemOpen
	\bibfield  {author} {\bibinfo {author} {\bibfnamefont {J.}~\bibnamefont
			{Tilly}}, \bibinfo {author} {\bibfnamefont {H.}~\bibnamefont {Chen}},
		\bibinfo {author} {\bibfnamefont {S.}~\bibnamefont {Cao}}, \bibinfo {author}
		{\bibfnamefont {D.}~\bibnamefont {Picozzi}}, \bibinfo {author} {\bibfnamefont
			{K.}~\bibnamefont {Setia}}, \bibinfo {author} {\bibfnamefont
			{Y.}~\bibnamefont {Li}}, \bibinfo {author} {\bibfnamefont {E.}~\bibnamefont
			{Grant}}, \bibinfo {author} {\bibfnamefont {L.}~\bibnamefont {Wossnig}},
		\bibinfo {author} {\bibfnamefont {I.}~\bibnamefont {Rungger}}, \bibinfo
		{author} {\bibfnamefont {G.~H.}\ \bibnamefont {Booth}}, \emph {et~al.},\
	}\bibfield  {title} {\bibinfo {title} {The variational quantum eigensolver: a
			review of methods and best practices},\ }\href@noop {} {\bibfield  {journal}
		{\bibinfo  {journal} {Physics Reports}\ }\textbf {\bibinfo {volume} {986}},\
		\bibinfo {pages} {1} (\bibinfo {year} {2022})}\BibitemShut {NoStop}%
	\bibitem [{\citenamefont {Kirby}\ and\ \citenamefont
		{Love}(2021)}]{kirby2021variational}%
	\BibitemOpen
	\bibfield  {author} {\bibinfo {author} {\bibfnamefont {W.~M.}\ \bibnamefont
			{Kirby}}\ and\ \bibinfo {author} {\bibfnamefont {P.~J.}\ \bibnamefont
			{Love}},\ }\bibfield  {title} {\bibinfo {title} {Variational quantum
			eigensolvers for sparse hamiltonians},\ }\href@noop {} {\bibfield  {journal}
		{\bibinfo  {journal} {Physical review letters}\ }\textbf {\bibinfo {volume}
			{127}},\ \bibinfo {pages} {110503} (\bibinfo {year} {2021})}\BibitemShut
	{NoStop}%
	\bibitem [{\citenamefont {Lu}\ \emph {et~al.}(2025)\citenamefont {Lu},
		\citenamefont {Peng},\ and\ \citenamefont {Chen}}]{lu2025neural}%
	\BibitemOpen
	\bibfield  {author} {\bibinfo {author} {\bibfnamefont {J.}~\bibnamefont
			{Lu}}, \bibinfo {author} {\bibfnamefont {H.}~\bibnamefont {Peng}},\ and\
		\bibinfo {author} {\bibfnamefont {Y.}~\bibnamefont {Chen}},\ }\bibfield
	{title} {\bibinfo {title} {Neural quantum digital twins for optimizing
			quantum annealing},\ }\href@noop {} {\bibfield  {journal} {\bibinfo
			{journal} {arXiv preprint arXiv:2505.15662}\ } (\bibinfo {year}
		{2025})}\BibitemShut {NoStop}%
	\bibitem [{\citenamefont {Higgott}\ \emph {et~al.}(2019)\citenamefont
		{Higgott}, \citenamefont {Wang},\ and\ \citenamefont
		{Brierley}}]{higgott2019variational}%
	\BibitemOpen
	\bibfield  {author} {\bibinfo {author} {\bibfnamefont {O.}~\bibnamefont
			{Higgott}}, \bibinfo {author} {\bibfnamefont {D.}~\bibnamefont {Wang}},\ and\
		\bibinfo {author} {\bibfnamefont {S.}~\bibnamefont {Brierley}},\ }\bibfield
	{title} {\bibinfo {title} {Variational quantum computation of excited
			states},\ }\href@noop {} {\bibfield  {journal} {\bibinfo  {journal}
			{Quantum}\ }\textbf {\bibinfo {volume} {3}},\ \bibinfo {pages} {156}
		(\bibinfo {year} {2019})}\BibitemShut {NoStop}%
	\bibitem [{\citenamefont {Nakanishi}\ \emph {et~al.}(2019)\citenamefont
		{Nakanishi}, \citenamefont {Mitarai},\ and\ \citenamefont
		{Fujii}}]{nakanishi2019subspace}%
	\BibitemOpen
	\bibfield  {author} {\bibinfo {author} {\bibfnamefont {K.~M.}\ \bibnamefont
			{Nakanishi}}, \bibinfo {author} {\bibfnamefont {K.}~\bibnamefont {Mitarai}},\
		and\ \bibinfo {author} {\bibfnamefont {K.}~\bibnamefont {Fujii}},\ }\bibfield
	{title} {\bibinfo {title} {Subspace-search variational quantum eigensolver
			for excited states},\ }\href@noop {} {\bibfield  {journal} {\bibinfo
			{journal} {Physical Review Research}\ }\textbf {\bibinfo {volume} {1}},\
		\bibinfo {pages} {033062} (\bibinfo {year} {2019})}\BibitemShut {NoStop}%
	\bibitem [{\citenamefont {Zhang}\ and\ \citenamefont
		{Wang}(2025)}]{zhang2025unified}%
	\BibitemOpen
	\bibfield  {author} {\bibinfo {author} {\bibfnamefont {S.-X.}\ \bibnamefont
			{Zhang}}\ and\ \bibinfo {author} {\bibfnamefont {L.}~\bibnamefont {Wang}},\
	}\bibfield  {title} {\bibinfo {title} {A unified variational framework for
			quantum excited states},\ }\href@noop {} {\bibfield  {journal} {\bibinfo
			{journal} {arXiv preprint arXiv:2504.21459}\ } (\bibinfo {year}
		{2025})}\BibitemShut {NoStop}%
	\bibitem [{\citenamefont {Chen}\ \emph {et~al.}(2025)\citenamefont {Chen},
		\citenamefont {Huang}, \citenamefont {Zhou}, \citenamefont {Schuch},
		\citenamefont {Cao},\ and\ \citenamefont {Yang}}]{chen2025tangent}%
	\BibitemOpen
	\bibfield  {author} {\bibinfo {author} {\bibfnamefont {J.-Y.}\ \bibnamefont
			{Chen}}, \bibinfo {author} {\bibfnamefont {B.}~\bibnamefont {Huang}},
		\bibinfo {author} {\bibfnamefont {D.}~\bibnamefont {Zhou}}, \bibinfo {author}
		{\bibfnamefont {N.}~\bibnamefont {Schuch}}, \bibinfo {author} {\bibfnamefont
			{C.}~\bibnamefont {Cao}},\ and\ \bibinfo {author} {\bibfnamefont
			{M.}~\bibnamefont {Yang}},\ }\bibfield  {title} {\bibinfo {title} {Tangent
			space excitation ansatz for quantum circuits},\ }\href@noop {} {\bibfield
		{journal} {\bibinfo  {journal} {arXiv preprint arXiv:2507.07646}\ } (\bibinfo
		{year} {2025})}\BibitemShut {NoStop}%
	\bibitem [{\citenamefont {Larocca}\ \emph {et~al.}(2025)\citenamefont
		{Larocca}, \citenamefont {Thanasilp}, \citenamefont {Wang}, \citenamefont
		{Sharma}, \citenamefont {Biamonte}, \citenamefont {Coles}, \citenamefont
		{Cincio}, \citenamefont {McClean}, \citenamefont {Holmes},\ and\
		\citenamefont {Cerezo}}]{larocca2025barren}%
	\BibitemOpen
	\bibfield  {author} {\bibinfo {author} {\bibfnamefont {M.}~\bibnamefont
			{Larocca}}, \bibinfo {author} {\bibfnamefont {S.}~\bibnamefont {Thanasilp}},
		\bibinfo {author} {\bibfnamefont {S.}~\bibnamefont {Wang}}, \bibinfo {author}
		{\bibfnamefont {K.}~\bibnamefont {Sharma}}, \bibinfo {author} {\bibfnamefont
			{J.}~\bibnamefont {Biamonte}}, \bibinfo {author} {\bibfnamefont {P.~J.}\
			\bibnamefont {Coles}}, \bibinfo {author} {\bibfnamefont {L.}~\bibnamefont
			{Cincio}}, \bibinfo {author} {\bibfnamefont {J.~R.}\ \bibnamefont {McClean}},
		\bibinfo {author} {\bibfnamefont {Z.}~\bibnamefont {Holmes}},\ and\ \bibinfo
		{author} {\bibfnamefont {M.}~\bibnamefont {Cerezo}},\ }\bibfield  {title}
	{\bibinfo {title} {Barren plateaus in variational quantum computing},\
	}\href@noop {} {\bibfield  {journal} {\bibinfo  {journal} {Nature Reviews
				Physics}\ ,\ \bibinfo {pages} {1}} (\bibinfo {year} {2025})}\BibitemShut
	{NoStop}%
	\bibitem [{\citenamefont {McClean}\ \emph {et~al.}(2018)\citenamefont
		{McClean}, \citenamefont {Boixo}, \citenamefont {Smelyanskiy}, \citenamefont
		{Babbush},\ and\ \citenamefont {Neven}}]{mcclean2018barren}%
	\BibitemOpen
	\bibfield  {author} {\bibinfo {author} {\bibfnamefont {J.~R.}\ \bibnamefont
			{McClean}}, \bibinfo {author} {\bibfnamefont {S.}~\bibnamefont {Boixo}},
		\bibinfo {author} {\bibfnamefont {V.~N.}\ \bibnamefont {Smelyanskiy}},
		\bibinfo {author} {\bibfnamefont {R.}~\bibnamefont {Babbush}},\ and\ \bibinfo
		{author} {\bibfnamefont {H.}~\bibnamefont {Neven}},\ }\bibfield  {title}
	{\bibinfo {title} {Barren plateaus in quantum neural network training
			landscapes},\ }\href@noop {} {\bibfield  {journal} {\bibinfo  {journal}
			{Nature communications}\ }\textbf {\bibinfo {volume} {9}},\ \bibinfo {pages}
		{4812} (\bibinfo {year} {2018})}\BibitemShut {NoStop}%
	\bibitem [{\citenamefont {Chancellor}\ \emph {et~al.}(2024)\citenamefont
		{Chancellor}, \citenamefont {McGeoch}, \citenamefont {Mniszewski},\ and\
		\citenamefont {Bernal~Niera}}]{chancellor2024experience}%
	\BibitemOpen
	\bibfield  {author} {\bibinfo {author} {\bibfnamefont {N.}~\bibnamefont
			{Chancellor}}, \bibinfo {author} {\bibfnamefont {C.~C.}\ \bibnamefont
			{McGeoch}}, \bibinfo {author} {\bibfnamefont {S.}~\bibnamefont
			{Mniszewski}},\ and\ \bibinfo {author} {\bibfnamefont {D.}~\bibnamefont
			{Bernal~Niera}},\ }\href@noop {} {\bibinfo {title} {Experience with quantum
			annealing computation}} (\bibinfo {year} {2024})\BibitemShut {NoStop}%
	\bibitem [{\citenamefont {Albash}\ and\ \citenamefont
		{Lidar}(2018)}]{albash2018adiabatic}%
	\BibitemOpen
	\bibfield  {author} {\bibinfo {author} {\bibfnamefont {T.}~\bibnamefont
			{Albash}}\ and\ \bibinfo {author} {\bibfnamefont {D.~A.}\ \bibnamefont
			{Lidar}},\ }\bibfield  {title} {\bibinfo {title} {Adiabatic quantum
			computation},\ }\href@noop {} {\bibfield  {journal} {\bibinfo  {journal}
			{Reviews of Modern Physics}\ }\textbf {\bibinfo {volume} {90}},\ \bibinfo
		{pages} {015002} (\bibinfo {year} {2018})}\BibitemShut {NoStop}%
	\bibitem [{\citenamefont {Karacan}\ \emph {et~al.}(2025)\citenamefont
		{Karacan}, \citenamefont {Mc~Keever}, \citenamefont {Foss-Feig},
		\citenamefont {Hayes},\ and\ \citenamefont {Lubasch}}]{karacan2025filter}%
	\BibitemOpen
	\bibfield  {author} {\bibinfo {author} {\bibfnamefont {E.}~\bibnamefont
			{Karacan}}, \bibinfo {author} {\bibfnamefont {C.}~\bibnamefont {Mc~Keever}},
		\bibinfo {author} {\bibfnamefont {M.}~\bibnamefont {Foss-Feig}}, \bibinfo
		{author} {\bibfnamefont {D.}~\bibnamefont {Hayes}},\ and\ \bibinfo {author}
		{\bibfnamefont {M.}~\bibnamefont {Lubasch}},\ }\bibfield  {title} {\bibinfo
		{title} {Filter-enhanced adiabatic quantum computing on a digital quantum
			processor},\ }\href@noop {} {\bibfield  {journal} {\bibinfo  {journal}
			{Physical Review Research}\ }\textbf {\bibinfo {volume} {7}},\ \bibinfo
		{pages} {033153} (\bibinfo {year} {2025})}\BibitemShut {NoStop}%
	\bibitem [{\citenamefont {Parrish}\ and\ \citenamefont
		{McMahon}(2019)}]{parrish2019quantum}%
	\BibitemOpen
	\bibfield  {author} {\bibinfo {author} {\bibfnamefont {R.~M.}\ \bibnamefont
			{Parrish}}\ and\ \bibinfo {author} {\bibfnamefont {P.~L.}\ \bibnamefont
			{McMahon}},\ }\bibfield  {title} {\bibinfo {title} {Quantum filter
			diagonalization: Quantum eigendecomposition without full quantum phase
			estimation},\ }\href@noop {} {\bibfield  {journal} {\bibinfo  {journal}
			{arXiv preprint arXiv:1909.08925}\ } (\bibinfo {year} {2019})}\BibitemShut
	{NoStop}%
	\bibitem [{\citenamefont {Sakuma}\ \emph {et~al.}(2026)\citenamefont {Sakuma},
		\citenamefont {Wada}, \citenamefont {Kanno}, \citenamefont {Keithley},
		\citenamefont {Sugisaki}, \citenamefont {Abe}, \citenamefont {Nakamura},\
		and\ \citenamefont {Yamamoto}}]{sakuma2026quantum}%
	\BibitemOpen
	\bibfield  {author} {\bibinfo {author} {\bibfnamefont {R.}~\bibnamefont
			{Sakuma}}, \bibinfo {author} {\bibfnamefont {K.}~\bibnamefont {Wada}},
		\bibinfo {author} {\bibfnamefont {S.}~\bibnamefont {Kanno}}, \bibinfo
		{author} {\bibfnamefont {K.}~\bibnamefont {Keithley}}, \bibinfo {author}
		{\bibfnamefont {K.}~\bibnamefont {Sugisaki}}, \bibinfo {author}
		{\bibfnamefont {T.}~\bibnamefont {Abe}}, \bibinfo {author} {\bibfnamefont
			{H.}~\bibnamefont {Nakamura}},\ and\ \bibinfo {author} {\bibfnamefont
			{N.}~\bibnamefont {Yamamoto}},\ }\bibfield  {title} {\bibinfo {title}
		{Quantum-phase-estimation-based filtering: Performance analysis and
			application to low-energy spectral calculations},\ }\href@noop {} {\bibfield
		{journal} {\bibinfo  {journal} {Physical Review A}\ }\textbf {\bibinfo
			{volume} {113}},\ \bibinfo {pages} {012602} (\bibinfo {year}
		{2026})}\BibitemShut {NoStop}%
	\bibitem [{\citenamefont {Xie}\ \emph {et~al.}(2024)\citenamefont {Xie},
		\citenamefont {Wei}, \citenamefont {Yang}, \citenamefont {Wang},
		\citenamefont {Chen}, \citenamefont {Fan},\ and\ \citenamefont
		{Long}}]{xie2024probabilistic}%
	\BibitemOpen
	\bibfield  {author} {\bibinfo {author} {\bibfnamefont {H.-N.}\ \bibnamefont
			{Xie}}, \bibinfo {author} {\bibfnamefont {S.-J.}\ \bibnamefont {Wei}},
		\bibinfo {author} {\bibfnamefont {F.}~\bibnamefont {Yang}}, \bibinfo {author}
		{\bibfnamefont {Z.-A.}\ \bibnamefont {Wang}}, \bibinfo {author}
		{\bibfnamefont {C.-T.}\ \bibnamefont {Chen}}, \bibinfo {author}
		{\bibfnamefont {H.}~\bibnamefont {Fan}},\ and\ \bibinfo {author}
		{\bibfnamefont {G.-L.}\ \bibnamefont {Long}},\ }\bibfield  {title} {\bibinfo
		{title} {Probabilistic imaginary-time evolution algorithm based on nonunitary
			quantum circuits},\ }\href@noop {} {\bibfield  {journal} {\bibinfo  {journal}
			{Physical Review A}\ }\textbf {\bibinfo {volume} {109}},\ \bibinfo {pages}
		{052414} (\bibinfo {year} {2024})}\BibitemShut {NoStop}%
	\bibitem [{\citenamefont {Turro}\ \emph {et~al.}(2022)\citenamefont {Turro},
		\citenamefont {Roggero}, \citenamefont {Amitrano}, \citenamefont {Luchi},
		\citenamefont {Wendt}, \citenamefont {DuBois}, \citenamefont {Quaglioni},\
		and\ \citenamefont {Pederiva}}]{turro2022imaginary}%
	\BibitemOpen
	\bibfield  {author} {\bibinfo {author} {\bibfnamefont {F.}~\bibnamefont
			{Turro}}, \bibinfo {author} {\bibfnamefont {A.}~\bibnamefont {Roggero}},
		\bibinfo {author} {\bibfnamefont {V.}~\bibnamefont {Amitrano}}, \bibinfo
		{author} {\bibfnamefont {P.}~\bibnamefont {Luchi}}, \bibinfo {author}
		{\bibfnamefont {K.~A.}\ \bibnamefont {Wendt}}, \bibinfo {author}
		{\bibfnamefont {J.~L.}\ \bibnamefont {DuBois}}, \bibinfo {author}
		{\bibfnamefont {S.}~\bibnamefont {Quaglioni}},\ and\ \bibinfo {author}
		{\bibfnamefont {F.}~\bibnamefont {Pederiva}},\ }\bibfield  {title} {\bibinfo
		{title} {Imaginary-time propagation on a quantum chip},\ }\href@noop {}
	{\bibfield  {journal} {\bibinfo  {journal} {Physical review A}\ }\textbf
		{\bibinfo {volume} {105}},\ \bibinfo {pages} {022440} (\bibinfo {year}
		{2022})}\BibitemShut {NoStop}%
	\bibitem [{\citenamefont {Mao}\ \emph {et~al.}(2023)\citenamefont {Mao},
		\citenamefont {Chaudhary}, \citenamefont {Kondappan}, \citenamefont {Shi},
		\citenamefont {Ilo-Okeke}, \citenamefont {Ivannikov},\ and\ \citenamefont
		{Byrnes}}]{mao2023measurement}%
	\BibitemOpen
	\bibfield  {author} {\bibinfo {author} {\bibfnamefont {Y.}~\bibnamefont
			{Mao}}, \bibinfo {author} {\bibfnamefont {M.}~\bibnamefont {Chaudhary}},
		\bibinfo {author} {\bibfnamefont {M.}~\bibnamefont {Kondappan}}, \bibinfo
		{author} {\bibfnamefont {J.}~\bibnamefont {Shi}}, \bibinfo {author}
		{\bibfnamefont {E.~O.}\ \bibnamefont {Ilo-Okeke}}, \bibinfo {author}
		{\bibfnamefont {V.}~\bibnamefont {Ivannikov}},\ and\ \bibinfo {author}
		{\bibfnamefont {T.}~\bibnamefont {Byrnes}},\ }\bibfield  {title} {\bibinfo
		{title} {Measurement-based deterministic imaginary time evolution},\
	}\href@noop {} {\bibfield  {journal} {\bibinfo  {journal} {Physical Review
				Letters}\ }\textbf {\bibinfo {volume} {131}},\ \bibinfo {pages} {110602}
		(\bibinfo {year} {2023})}\BibitemShut {NoStop}%
	\bibitem [{\citenamefont {Gangat}\ \emph {et~al.}(2017)\citenamefont {Gangat},
		\citenamefont {I},\ and\ \citenamefont {Kao}}]{gangat2017steady}%
	\BibitemOpen
	\bibfield  {author} {\bibinfo {author} {\bibfnamefont {A.~A.}\ \bibnamefont
			{Gangat}}, \bibinfo {author} {\bibfnamefont {T.}~\bibnamefont {I}},\ and\
		\bibinfo {author} {\bibfnamefont {Y.-J.}\ \bibnamefont {Kao}},\ }\bibfield
	{title} {\bibinfo {title} {Steady states of infinite-size dissipative quantum
			chains via imaginary time evolution},\ }\href@noop {} {\bibfield  {journal}
		{\bibinfo  {journal} {Physical review letters}\ }\textbf {\bibinfo {volume}
			{119}},\ \bibinfo {pages} {010501} (\bibinfo {year} {2017})}\BibitemShut
	{NoStop}%
	\bibitem [{\citenamefont {Lin}\ \emph {et~al.}(2021)\citenamefont {Lin},
		\citenamefont {Dilip}, \citenamefont {Green}, \citenamefont {Smith},\ and\
		\citenamefont {Pollmann}}]{lin2021real}%
	\BibitemOpen
	\bibfield  {author} {\bibinfo {author} {\bibfnamefont {S.-H.}\ \bibnamefont
			{Lin}}, \bibinfo {author} {\bibfnamefont {R.}~\bibnamefont {Dilip}}, \bibinfo
		{author} {\bibfnamefont {A.~G.}\ \bibnamefont {Green}}, \bibinfo {author}
		{\bibfnamefont {A.}~\bibnamefont {Smith}},\ and\ \bibinfo {author}
		{\bibfnamefont {F.}~\bibnamefont {Pollmann}},\ }\bibfield  {title} {\bibinfo
		{title} {Real-and imaginary-time evolution with compressed quantum
			circuits},\ }\href@noop {} {\bibfield  {journal} {\bibinfo  {journal} {PRX
				Quantum}\ }\textbf {\bibinfo {volume} {2}},\ \bibinfo {pages} {010342}
		(\bibinfo {year} {2021})}\BibitemShut {NoStop}%
	\bibitem [{\citenamefont {Jones}\ \emph {et~al.}(2019)\citenamefont {Jones},
		\citenamefont {Endo}, \citenamefont {McArdle}, \citenamefont {Yuan},\ and\
		\citenamefont {Benjamin}}]{jones2019variational}%
	\BibitemOpen
	\bibfield  {author} {\bibinfo {author} {\bibfnamefont {T.}~\bibnamefont
			{Jones}}, \bibinfo {author} {\bibfnamefont {S.}~\bibnamefont {Endo}},
		\bibinfo {author} {\bibfnamefont {S.}~\bibnamefont {McArdle}}, \bibinfo
		{author} {\bibfnamefont {X.}~\bibnamefont {Yuan}},\ and\ \bibinfo {author}
		{\bibfnamefont {S.~C.}\ \bibnamefont {Benjamin}},\ }\bibfield  {title}
	{\bibinfo {title} {Variational quantum algorithms for discovering hamiltonian
			spectra},\ }\href@noop {} {\bibfield  {journal} {\bibinfo  {journal}
			{Physical Review A}\ }\textbf {\bibinfo {volume} {99}},\ \bibinfo {pages}
		{062304} (\bibinfo {year} {2019})}\BibitemShut {NoStop}%
	\bibitem [{\citenamefont {Chai}\ and\ \citenamefont
		{Di~Tucci}(2025)}]{chai2025optimizing}%
	\BibitemOpen
	\bibfield  {author} {\bibinfo {author} {\bibfnamefont {Y.}~\bibnamefont
			{Chai}}\ and\ \bibinfo {author} {\bibfnamefont {A.}~\bibnamefont
			{Di~Tucci}},\ }\bibfield  {title} {\bibinfo {title} {Optimizing qubo on a
			quantum computer by mimicking imaginary time evolution},\ }\href@noop {}
	{\bibfield  {journal} {\bibinfo  {journal} {arXiv preprint arXiv:2505.22924}\
		} (\bibinfo {year} {2025})}\BibitemShut {NoStop}%
	\bibitem [{\citenamefont {Motta}\ \emph {et~al.}(2020)\citenamefont {Motta},
		\citenamefont {Sun}, \citenamefont {Tan}, \citenamefont {O’Rourke},
		\citenamefont {Ye}, \citenamefont {Minnich}, \citenamefont {Brandao},\ and\
		\citenamefont {Chan}}]{motta2020determining}%
	\BibitemOpen
	\bibfield  {author} {\bibinfo {author} {\bibfnamefont {M.}~\bibnamefont
			{Motta}}, \bibinfo {author} {\bibfnamefont {C.}~\bibnamefont {Sun}}, \bibinfo
		{author} {\bibfnamefont {A.~T.}\ \bibnamefont {Tan}}, \bibinfo {author}
		{\bibfnamefont {M.~J.}\ \bibnamefont {O’Rourke}}, \bibinfo {author}
		{\bibfnamefont {E.}~\bibnamefont {Ye}}, \bibinfo {author} {\bibfnamefont
			{A.~J.}\ \bibnamefont {Minnich}}, \bibinfo {author} {\bibfnamefont {F.~G.}\
			\bibnamefont {Brandao}},\ and\ \bibinfo {author} {\bibfnamefont {G.~K.-L.}\
			\bibnamefont {Chan}},\ }\bibfield  {title} {\bibinfo {title} {Determining
			eigenstates and thermal states on a quantum computer using quantum imaginary
			time evolution},\ }\href@noop {} {\bibfield  {journal} {\bibinfo  {journal}
			{Nature Physics}\ }\textbf {\bibinfo {volume} {16}},\ \bibinfo {pages} {205}
		(\bibinfo {year} {2020})}\BibitemShut {NoStop}%
	\bibitem [{\citenamefont {Sun}\ \emph {et~al.}(2021)\citenamefont {Sun},
		\citenamefont {Motta}, \citenamefont {Tazhigulov}, \citenamefont {Tan},
		\citenamefont {Chan},\ and\ \citenamefont {Minnich}}]{sun2021quantum}%
	\BibitemOpen
	\bibfield  {author} {\bibinfo {author} {\bibfnamefont {S.-N.}\ \bibnamefont
			{Sun}}, \bibinfo {author} {\bibfnamefont {M.}~\bibnamefont {Motta}}, \bibinfo
		{author} {\bibfnamefont {R.~N.}\ \bibnamefont {Tazhigulov}}, \bibinfo
		{author} {\bibfnamefont {A.~T.}\ \bibnamefont {Tan}}, \bibinfo {author}
		{\bibfnamefont {G.~K.-L.}\ \bibnamefont {Chan}},\ and\ \bibinfo {author}
		{\bibfnamefont {A.~J.}\ \bibnamefont {Minnich}},\ }\bibfield  {title}
	{\bibinfo {title} {Quantum computation of finite-temperature static and
			dynamical properties of spin systems using quantum imaginary time
			evolution},\ }\href@noop {} {\bibfield  {journal} {\bibinfo  {journal} {PRX
				Quantum}\ }\textbf {\bibinfo {volume} {2}},\ \bibinfo {pages} {010317}
		(\bibinfo {year} {2021})}\BibitemShut {NoStop}%
	\bibitem [{\citenamefont {Kamakari}\ \emph {et~al.}(2022)\citenamefont
		{Kamakari}, \citenamefont {Sun}, \citenamefont {Motta},\ and\ \citenamefont
		{Minnich}}]{kamakari2022digital}%
	\BibitemOpen
	\bibfield  {author} {\bibinfo {author} {\bibfnamefont {H.}~\bibnamefont
			{Kamakari}}, \bibinfo {author} {\bibfnamefont {S.-N.}\ \bibnamefont {Sun}},
		\bibinfo {author} {\bibfnamefont {M.}~\bibnamefont {Motta}},\ and\ \bibinfo
		{author} {\bibfnamefont {A.~J.}\ \bibnamefont {Minnich}},\ }\bibfield
	{title} {\bibinfo {title} {Digital quantum simulation of open quantum systems
			using quantum imaginary--time evolution},\ }\href@noop {} {\bibfield
		{journal} {\bibinfo  {journal} {PRX quantum}\ }\textbf {\bibinfo {volume}
			{3}},\ \bibinfo {pages} {010320} (\bibinfo {year} {2022})}\BibitemShut
	{NoStop}%
	\bibitem [{\citenamefont {Nishi}\ \emph {et~al.}(2021)\citenamefont {Nishi},
		\citenamefont {Kosugi},\ and\ \citenamefont
		{Matsushita}}]{nishi2021implementation}%
	\BibitemOpen
	\bibfield  {author} {\bibinfo {author} {\bibfnamefont {H.}~\bibnamefont
			{Nishi}}, \bibinfo {author} {\bibfnamefont {T.}~\bibnamefont {Kosugi}},\ and\
		\bibinfo {author} {\bibfnamefont {Y.-i.}\ \bibnamefont {Matsushita}},\
	}\bibfield  {title} {\bibinfo {title} {Implementation of quantum
			imaginary-time evolution method on nisq devices by introducing nonlocal
			approximation},\ }\href@noop {} {\bibfield  {journal} {\bibinfo  {journal}
			{npj Quantum Information}\ }\textbf {\bibinfo {volume} {7}},\ \bibinfo
		{pages} {85} (\bibinfo {year} {2021})}\BibitemShut {NoStop}%
	\bibitem [{\citenamefont {Benedetti}\ \emph {et~al.}(2021)\citenamefont
		{Benedetti}, \citenamefont {Fiorentini},\ and\ \citenamefont
		{Lubasch}}]{benedetti2021hardware}%
	\BibitemOpen
	\bibfield  {author} {\bibinfo {author} {\bibfnamefont {M.}~\bibnamefont
			{Benedetti}}, \bibinfo {author} {\bibfnamefont {M.}~\bibnamefont
			{Fiorentini}},\ and\ \bibinfo {author} {\bibfnamefont {M.}~\bibnamefont
			{Lubasch}},\ }\bibfield  {title} {\bibinfo {title} {Hardware-efficient
			variational quantum algorithms for time evolution},\ }\href@noop {}
	{\bibfield  {journal} {\bibinfo  {journal} {Physical Review Research}\
		}\textbf {\bibinfo {volume} {3}},\ \bibinfo {pages} {033083} (\bibinfo {year}
		{2021})}\BibitemShut {NoStop}%
	\bibitem [{\citenamefont {McArdle}\ \emph {et~al.}(2019)\citenamefont
		{McArdle}, \citenamefont {Jones}, \citenamefont {Endo}, \citenamefont {Li},
		\citenamefont {Benjamin},\ and\ \citenamefont
		{Yuan}}]{mcardle2019variational}%
	\BibitemOpen
	\bibfield  {author} {\bibinfo {author} {\bibfnamefont {S.}~\bibnamefont
			{McArdle}}, \bibinfo {author} {\bibfnamefont {T.}~\bibnamefont {Jones}},
		\bibinfo {author} {\bibfnamefont {S.}~\bibnamefont {Endo}}, \bibinfo {author}
		{\bibfnamefont {Y.}~\bibnamefont {Li}}, \bibinfo {author} {\bibfnamefont
			{S.~C.}\ \bibnamefont {Benjamin}},\ and\ \bibinfo {author} {\bibfnamefont
			{X.}~\bibnamefont {Yuan}},\ }\bibfield  {title} {\bibinfo {title}
		{Variational ansatz-based quantum simulation of imaginary time evolution},\
	}\href@noop {} {\bibfield  {journal} {\bibinfo  {journal} {npj Quantum
				Information}\ }\textbf {\bibinfo {volume} {5}},\ \bibinfo {pages} {75}
		(\bibinfo {year} {2019})}\BibitemShut {NoStop}%
	\bibitem [{\citenamefont {Zhang}\ \emph
		{et~al.}(2025{\natexlab{a}})\citenamefont {Zhang}, \citenamefont {Huang},
		\citenamefont {Sun}, \citenamefont {Lv},\ and\ \citenamefont
		{Yuan}}]{zhang2025quantum}%
	\BibitemOpen
	\bibfield  {author} {\bibinfo {author} {\bibfnamefont {Y.}~\bibnamefont
			{Zhang}}, \bibinfo {author} {\bibfnamefont {Y.}~\bibnamefont {Huang}},
		\bibinfo {author} {\bibfnamefont {J.}~\bibnamefont {Sun}}, \bibinfo {author}
		{\bibfnamefont {D.}~\bibnamefont {Lv}},\ and\ \bibinfo {author}
		{\bibfnamefont {X.}~\bibnamefont {Yuan}},\ }\bibfield  {title} {\bibinfo
		{title} {Quantum computing quantum monte carlo algorithm},\ }\href@noop {}
	{\bibfield  {journal} {\bibinfo  {journal} {Physical Review A}\ }\textbf
		{\bibinfo {volume} {112}},\ \bibinfo {pages} {022428} (\bibinfo {year}
		{2025}{\natexlab{a}})}\BibitemShut {NoStop}%
	\bibitem [{\citenamefont {Huggins}\ \emph {et~al.}(2022)\citenamefont
		{Huggins}, \citenamefont {O’Gorman}, \citenamefont {Rubin}, \citenamefont
		{Reichman}, \citenamefont {Babbush},\ and\ \citenamefont
		{Lee}}]{huggins2022unbiasing}%
	\BibitemOpen
	\bibfield  {author} {\bibinfo {author} {\bibfnamefont {W.~J.}\ \bibnamefont
			{Huggins}}, \bibinfo {author} {\bibfnamefont {B.~A.}\ \bibnamefont
			{O’Gorman}}, \bibinfo {author} {\bibfnamefont {N.~C.}\ \bibnamefont
			{Rubin}}, \bibinfo {author} {\bibfnamefont {D.~R.}\ \bibnamefont {Reichman}},
		\bibinfo {author} {\bibfnamefont {R.}~\bibnamefont {Babbush}},\ and\ \bibinfo
		{author} {\bibfnamefont {J.}~\bibnamefont {Lee}},\ }\bibfield  {title}
	{\bibinfo {title} {Unbiasing fermionic quantum monte carlo with a quantum
			computer},\ }\href@noop {} {\bibfield  {journal} {\bibinfo  {journal}
			{Nature}\ }\textbf {\bibinfo {volume} {603}},\ \bibinfo {pages} {416}
		(\bibinfo {year} {2022})}\BibitemShut {NoStop}%
	\bibitem [{\citenamefont {Kanno}\ \emph {et~al.}(2024)\citenamefont {Kanno},
		\citenamefont {Nakamura}, \citenamefont {Kobayashi}, \citenamefont {Gocho},
		\citenamefont {Hatanaka}, \citenamefont {Yamamoto},\ and\ \citenamefont
		{Gao}}]{kanno2024quantum}%
	\BibitemOpen
	\bibfield  {author} {\bibinfo {author} {\bibfnamefont {S.}~\bibnamefont
			{Kanno}}, \bibinfo {author} {\bibfnamefont {H.}~\bibnamefont {Nakamura}},
		\bibinfo {author} {\bibfnamefont {T.}~\bibnamefont {Kobayashi}}, \bibinfo
		{author} {\bibfnamefont {S.}~\bibnamefont {Gocho}}, \bibinfo {author}
		{\bibfnamefont {M.}~\bibnamefont {Hatanaka}}, \bibinfo {author}
		{\bibfnamefont {N.}~\bibnamefont {Yamamoto}},\ and\ \bibinfo {author}
		{\bibfnamefont {Q.}~\bibnamefont {Gao}},\ }\bibfield  {title} {\bibinfo
		{title} {Quantum computing quantum monte carlo with hybrid tensor network for
			electronic structure calculations},\ }\href@noop {} {\bibfield  {journal}
		{\bibinfo  {journal} {npj Quantum Information}\ }\textbf {\bibinfo {volume}
			{10}},\ \bibinfo {pages} {56} (\bibinfo {year} {2024})}\BibitemShut {NoStop}%
	\bibitem [{\citenamefont {Ye}\ \emph {et~al.}(2025)\citenamefont {Ye},
		\citenamefont {Geng}, \citenamefont {Han}, \citenamefont {Li}, \citenamefont
		{Duan},\ and\ \citenamefont {Deng}}]{ye2025quantum}%
	\BibitemOpen
	\bibfield  {author} {\bibinfo {author} {\bibfnamefont {Q.}~\bibnamefont
			{Ye}}, \bibinfo {author} {\bibfnamefont {S.}~\bibnamefont {Geng}}, \bibinfo
		{author} {\bibfnamefont {Z.}~\bibnamefont {Han}}, \bibinfo {author}
		{\bibfnamefont {W.}~\bibnamefont {Li}}, \bibinfo {author} {\bibfnamefont
			{L.-M.}\ \bibnamefont {Duan}},\ and\ \bibinfo {author} {\bibfnamefont
			{D.-L.}\ \bibnamefont {Deng}},\ }\bibfield  {title} {\bibinfo {title}
		{Quantum automated learning with provable and explainable trainability},\
	}\href@noop {} {\bibfield  {journal} {\bibinfo  {journal} {arXiv preprint
				arXiv:2502.05264}\ } (\bibinfo {year} {2025})}\BibitemShut {NoStop}%
	\bibitem [{\citenamefont {Alipour}\ and\ \citenamefont
		{Ojanen}(2025)}]{alipour2025state}%
	\BibitemOpen
	\bibfield  {author} {\bibinfo {author} {\bibfnamefont {S.}~\bibnamefont
			{Alipour}}\ and\ \bibinfo {author} {\bibfnamefont {T.}~\bibnamefont
			{Ojanen}},\ }\bibfield  {title} {\bibinfo {title} {State-based quantum
			simulation of imaginary-time evolution},\ }\href@noop {} {\bibfield
		{journal} {\bibinfo  {journal} {arXiv preprint arXiv:2506.12381}\ } (\bibinfo
		{year} {2025})}\BibitemShut {NoStop}%
	\bibitem [{\citenamefont {Biamonte}\ and\ \citenamefont
		{Love}(2008)}]{PhysRevA.78.012352}%
	\BibitemOpen
	\bibfield  {author} {\bibinfo {author} {\bibfnamefont {J.~D.}\ \bibnamefont
			{Biamonte}}\ and\ \bibinfo {author} {\bibfnamefont {P.~J.}\ \bibnamefont
			{Love}},\ }\bibfield  {title} {\bibinfo {title} {Realizable hamiltonians for
			universal adiabatic quantum computers},\ }\href
	{https://doi.org/10.1103/PhysRevA.78.012352} {\bibfield  {journal} {\bibinfo
			{journal} {Phys. Rev. A}\ }\textbf {\bibinfo {volume} {78}},\ \bibinfo
		{pages} {012352} (\bibinfo {year} {2008})}\BibitemShut {NoStop}%
	\bibitem [{\citenamefont {Vale}\ \emph {et~al.}(2023)\citenamefont {Vale},
		\citenamefont {Azevedo}, \citenamefont {Ara{\'u}jo}, \citenamefont {Araujo},\
		and\ \citenamefont {da~Silva}}]{vale2023decomposition}%
	\BibitemOpen
	\bibfield  {author} {\bibinfo {author} {\bibfnamefont {R.}~\bibnamefont
			{Vale}}, \bibinfo {author} {\bibfnamefont {T.~M.~D.}\ \bibnamefont
			{Azevedo}}, \bibinfo {author} {\bibfnamefont {I.}~\bibnamefont {Ara{\'u}jo}},
		\bibinfo {author} {\bibfnamefont {I.~F.}\ \bibnamefont {Araujo}},\ and\
		\bibinfo {author} {\bibfnamefont {A.~J.}\ \bibnamefont {da~Silva}},\
	}\bibfield  {title} {\bibinfo {title} {Decomposition of multi-controlled
			special unitary single-qubit gates},\ }\href@noop {} {\bibfield  {journal}
		{\bibinfo  {journal} {arXiv preprint arXiv:2302.06377}\ } (\bibinfo {year}
		{2023})}\BibitemShut {NoStop}%
	\bibitem [{\citenamefont {Lin}\ and\ \citenamefont {Tong}(2020)}]{lin2020near}%
	\BibitemOpen
	\bibfield  {author} {\bibinfo {author} {\bibfnamefont {L.}~\bibnamefont
			{Lin}}\ and\ \bibinfo {author} {\bibfnamefont {Y.}~\bibnamefont {Tong}},\
	}\bibfield  {title} {\bibinfo {title} {Near-optimal ground state
			preparation},\ }\href@noop {} {\bibfield  {journal} {\bibinfo  {journal}
			{Quantum}\ }\textbf {\bibinfo {volume} {4}},\ \bibinfo {pages} {372}
		(\bibinfo {year} {2020})}\BibitemShut {NoStop}%
	\bibitem [{\citenamefont {Zhang}\ \emph
		{et~al.}(2025{\natexlab{b}})\citenamefont {Zhang}, \citenamefont {Zhang},
		\citenamefont {Sun}, \citenamefont {Lin}, \citenamefont {Huang},
		\citenamefont {Lv},\ and\ \citenamefont {Yuan}}]{zhang2025quantumre}%
	\BibitemOpen
	\bibfield  {author} {\bibinfo {author} {\bibfnamefont {Y.}~\bibnamefont
			{Zhang}}, \bibinfo {author} {\bibfnamefont {X.}~\bibnamefont {Zhang}},
		\bibinfo {author} {\bibfnamefont {J.}~\bibnamefont {Sun}}, \bibinfo {author}
		{\bibfnamefont {H.}~\bibnamefont {Lin}}, \bibinfo {author} {\bibfnamefont
			{Y.}~\bibnamefont {Huang}}, \bibinfo {author} {\bibfnamefont
			{D.}~\bibnamefont {Lv}},\ and\ \bibinfo {author} {\bibfnamefont
			{X.}~\bibnamefont {Yuan}},\ }\bibfield  {title} {\bibinfo {title} {Quantum
			algorithms for quantum molecular systems: A survey},\ }\href@noop {}
	{\bibfield  {journal} {\bibinfo  {journal} {Wiley Interdisciplinary Reviews:
				Computational Molecular Science}\ }\textbf {\bibinfo {volume} {15}},\
		\bibinfo {pages} {e70020} (\bibinfo {year} {2025}{\natexlab{b}})}\BibitemShut
	{NoStop}%
	\bibitem [{\citenamefont {Stetcu}\ \emph {et~al.}(2023)\citenamefont {Stetcu},
		\citenamefont {Baroni},\ and\ \citenamefont
		{Carlson}}]{stetcu2023projection}%
	\BibitemOpen
	\bibfield  {author} {\bibinfo {author} {\bibfnamefont {I.}~\bibnamefont
			{Stetcu}}, \bibinfo {author} {\bibfnamefont {A.}~\bibnamefont {Baroni}},\
		and\ \bibinfo {author} {\bibfnamefont {J.}~\bibnamefont {Carlson}},\
	}\bibfield  {title} {\bibinfo {title} {Projection algorithm for state
			preparation on quantum computers},\ }\href@noop {} {\bibfield  {journal}
		{\bibinfo  {journal} {Physical Review C}\ }\textbf {\bibinfo {volume}
			{108}},\ \bibinfo {pages} {L031306} (\bibinfo {year} {2023})}\BibitemShut
	{NoStop}%
	\bibitem [{\citenamefont {McClean}\ \emph {et~al.}(2020)\citenamefont
		{McClean}, \citenamefont {Jiang}, \citenamefont {Rubin}, \citenamefont
		{Babbush},\ and\ \citenamefont {Neven}}]{mcclean2020decoding}%
	\BibitemOpen
	\bibfield  {author} {\bibinfo {author} {\bibfnamefont {J.~R.}\ \bibnamefont
			{McClean}}, \bibinfo {author} {\bibfnamefont {Z.}~\bibnamefont {Jiang}},
		\bibinfo {author} {\bibfnamefont {N.~C.}\ \bibnamefont {Rubin}}, \bibinfo
		{author} {\bibfnamefont {R.}~\bibnamefont {Babbush}},\ and\ \bibinfo {author}
		{\bibfnamefont {H.}~\bibnamefont {Neven}},\ }\bibfield  {title} {\bibinfo
		{title} {Decoding quantum errors with subspace expansions},\ }\href@noop {}
	{\bibfield  {journal} {\bibinfo  {journal} {Nature communications}\ }\textbf
		{\bibinfo {volume} {11}},\ \bibinfo {pages} {636} (\bibinfo {year}
		{2020})}\BibitemShut {NoStop}%
	\bibitem [{\citenamefont {Takeshita}\ \emph {et~al.}(2020)\citenamefont
		{Takeshita}, \citenamefont {Rubin}, \citenamefont {Jiang}, \citenamefont
		{Lee}, \citenamefont {Babbush},\ and\ \citenamefont
		{McClean}}]{takeshita2020increasing}%
	\BibitemOpen
	\bibfield  {author} {\bibinfo {author} {\bibfnamefont {T.}~\bibnamefont
			{Takeshita}}, \bibinfo {author} {\bibfnamefont {N.~C.}\ \bibnamefont
			{Rubin}}, \bibinfo {author} {\bibfnamefont {Z.}~\bibnamefont {Jiang}},
		\bibinfo {author} {\bibfnamefont {E.}~\bibnamefont {Lee}}, \bibinfo {author}
		{\bibfnamefont {R.}~\bibnamefont {Babbush}},\ and\ \bibinfo {author}
		{\bibfnamefont {J.~R.}\ \bibnamefont {McClean}},\ }\bibfield  {title}
	{\bibinfo {title} {Increasing the representation accuracy of quantum
			simulations of chemistry without extra quantum resources},\ }\href@noop {}
	{\bibfield  {journal} {\bibinfo  {journal} {Physical Review X}\ }\textbf
		{\bibinfo {volume} {10}},\ \bibinfo {pages} {011004} (\bibinfo {year}
		{2020})}\BibitemShut {NoStop}%
	\bibitem [{\citenamefont {Kirby}\ \emph {et~al.}(2023)\citenamefont {Kirby},
		\citenamefont {Motta},\ and\ \citenamefont {Mezzacapo}}]{kirby2023exact}%
	\BibitemOpen
	\bibfield  {author} {\bibinfo {author} {\bibfnamefont {W.}~\bibnamefont
			{Kirby}}, \bibinfo {author} {\bibfnamefont {M.}~\bibnamefont {Motta}},\ and\
		\bibinfo {author} {\bibfnamefont {A.}~\bibnamefont {Mezzacapo}},\ }\bibfield
	{title} {\bibinfo {title} {Exact and efficient lanczos method on a quantum
			computer},\ }\href@noop {} {\bibfield  {journal} {\bibinfo  {journal}
			{Quantum}\ }\textbf {\bibinfo {volume} {7}},\ \bibinfo {pages} {1018}
		(\bibinfo {year} {2023})}\BibitemShut {NoStop}%
\end{thebibliography}
%

\end{document}